\definecolor{mynicegreen}{RGB}{70,150,70}
\newcommand{\rtd}{{\sc Roman \{3\}-Domination}}
\newcommand{\drd}{{\sc Double Roman Domination}}
\newcommand{\rrd}{{\sc Roman \{2\}-Domination}}
\newcommand{\trd}{{\sc Total Roman Domination}}
\newcommand{\ird}{{\sc Independent Roman Domination}}
\newcommand{\wrd}{{\sc Weak Roman Domination}}
\newcommand{\rd}{{\sc Roman Domination}}
\newcommand{\ds}{{\sc Dominating Set}}
\newcommand{\ethc}{{\sc Exact 3-cover}}
\newtheorem{observation}{Observation}{\bfseries}{\itshape}
\begin{document}
\title{Hardness and Algorithmic Results for \newline Roman \{3\}-Domination}
\titlerunning{Roman \{3\}-Domination}

\author{Sangam Balchandar Reddy\orcidlink{0000-0002-5848-3821}}
\authorrunning{S. B. Reddy}
%
\institute{School of Computer and Information Sciences, \\ University of Hyderabad, Hyderabad, India \vspace{2mm} \\
\email{21mcpc14@uohyd.ac.in}}
\maketitle              
\begin{abstract}
A Roman $\{3\}$-dominating function on a graph $G = (V, E)$ is a function $f: V \rightarrow \{0, 1, 2, 3\}$ such that for each vertex $u \in V$, if $f(u) = 0$ then $\sum_{v \in N(u)} f(v) \geq 3$ and if $f(u) = 1$ then $\sum_{v \in N(u)} f(v) \geq 2$. The weight of a Roman $\{3\}$-dominating function $f$ is $\sum_{u \in V} f(u)$. The objective of \rtd{} is to compute a Roman $\{3\}$-dominating function of minimum weight. The problem is known to be NP-complete on chordal graphs, star-convex bipartite graphs and comb-convex bipartite graphs. In this paper, we study the complexity of \rtd{} and show that the problem is NP-complete on split graphs. In addition, we prove that the problem is W[2]-hard parameterized by weight. On the positive front, we present a polynomial-time algorithm for block graphs, thereby resolving an open question posed by Chaudhary and Pradhan [Discrete Applied Mathematics, 2024]. 
\keywords{domination \and Roman $\{3\}$-domination \and split graphs \and block graphs \and parameterized complexity} 
\end{abstract}
\section{Introduction}
Given a graph $G = (V, E)$, a Roman dominating function $f: V \rightarrow{} \{0 ,1, 2\}$ is a labeling of vertices such that each vertex $u \in V$ with $f(u) = 0$ has a vertex $v \in N(u)$ with $f(v) =2$. The weight of a Roman dominating function $f$ is the sum of $f(u)$ over all $u \in V(G)$, that is, $\sum_{u \in V(G)} f(u)$. The objective of \rd{} (RD) is to compute a Roman dominating function of minimum weight.

RD is a fascinating concept in graph theory that draws inspiration from the strategic positioning of legions to protect the Roman Empire. The concept of RD has been introduced by Cockayene et al.~\cite{COCKAYNE200411} in 2004. Liu et al.~\cite{liu2013roman} proved that RD is NP-complete on split graphs and bipartite graphs. They also provided a linear-time algorithm for strongly chordal graphs. RD also admits linear-time algorithms for cographs, interval graphs, graphs of bounded clique-width and a polynomial-time algorithm for AT-free graphs~\cite{liedloff2008efficient}. 

Fernau et al.~\cite{henning} studied RD in the realm of parameterized complexity and proved that the problem is W[2]-complete parameterized by weight. They also provided an FPT algorithm that runs in time $\mathcal{O}^*(5^t)$, where $t$ is the treewidth of the input graph. In addition, on planar graphs, they showed that RD can be solved in $\mathcal{O}^*(3.3723^k)$ time. Ashok et al.~\cite{ashok2024independent} provided an $\mathcal{O}^*(4^d)$ algorithm for RD where $d$ is the distance to cluster. Recently, Mohanapriya et al.~\cite{mohanapriya2023roman} proved that RD parameterized by solution size is W[1]-hard, even on split graphs.

Many variants of RD, such as \rrd{}, \drd{}, \trd{}, \ird{} and \wrd{} were studied from an algorithmic standpoint. Now, we discuss the problems \rrd{} and \drd{}, as well as their computational complexities.

Chellali et al.~\cite{CHELLALI201622} introduced the concept of \rrd{} in graphs. For a graph $G = (V, E)$, a Roman $\{2\}$-dominating function $f: V \rightarrow{} \{0 ,1, 2\}$ is a labeling of vertices such that for each vertex $u \in V$ with $f(u) = 0$, either there exist two vertices $v_1, v_2 \in N(u)$ with $f(v_1) =1$ and $f(v_2) = 1$ or there exists a vertex $v \in N(u)$ with $f(v) =2$. The results related to the complexity of the problem can be found in~\cite{chen2022roman,fernandez2023new,poureidi2020algorithmic}. 

The study on \drd{} was initiated by Beeler et al.~\cite{beeler2016double}. A double Roman dominating function $f: V \rightarrow{} \{0 ,1, 2, 3\}$ is a labeling of vertices such that for each vertex $u \in V$ with $f(u) = 0$, either there exist two vertices $v_1, v_2 \in N(u)$ with $f(v_1) =2$ and $f(v_2) = 2$ or there exists a vertex $v \in N(u)$ with $f(v) =3$. For each vertex $u \in V$ with $f(u) = 1$ there exists a vertex $v \in N(u)$ with $f(v) \geq2$. More details on the problem complexity can be found in~\cite{ahangar2017double,banerjee2020algorithmic,yue2018double,zhang2018double}.

A Roman $\{3\}$-dominating function on a graph $G = (V, E)$ is a function $f: V \rightarrow \{0, 1, 2, 3\}$ such that for each vertex $u \in V$, if $f(u) = 0$ then $\sum_{v \in N(u)} f(v) \geq 3$, else if $f(u) = 1$ then $\sum_{v \in N(u)} f(v) \geq 2$. The weight of a Roman \{3\}-dominating function $f$ is $\sum_{u \in V(G)} f(u)$. The minimum weight over all Roman \{3\}-dominating functions is denoted by $\gamma_{R3}(G)$. The objective of \rtd{} (R3D) is to compute $\gamma_{R3}(G)$.

Mojdeh et al.~\cite{mojdeh2020roman} initiated the study on R3D by showing that the problem is NP-complete on bipartite graphs. Chakradar et al.~\cite{chakradhar2022algorithmic} proved that the problem is NP-complete on chordal graphs, planar graphs, star-convex bipartite graphs and comb-convex bipartite graphs. They provided linear-time algorithms for bounded treewidth graphs, chain graphs and threshold graphs. They also showed that the problem is APX-complete on graphs with maximum degree 4. Goyal et al.~\cite{goyal2021hardness} proposed an $\mathcal{O}(\ln \Delta(G))$ approximation algorithm. Recently, Chaudhary et al.~\cite{CHAUDHARY2024301} proved that the problem is NP-complete on chordal bipartite graphs, planar graphs and star-convex bipartite graphs. They also presented linear-time algorithms for chain graphs and cographs. In addition, they provided several approximation-related results.

In this paper, we extend the study on the complexity aspects of R3D and obtain the following results. 
\begin{enumerate}
    \item As the problem is known to be NP-complete on chordal graphs, we study the problem complexity on \textit{split graphs}, a subclass of chordal graphs and prove that the problem is NP-complete.
    \item We initiate the study on parameterized complexity of R3D and prove that the problem is W[2]-hard parameterized by weight.
    \item In addition, we present a polynomial-time algorithm for block graphs.
\end{enumerate}
\section{Preliminaries}
Let $G = (V, E)$ be a graph with $V$ as the vertex set and $E$ as the edge set, such that $n = |V|$ and $m = |E|$. The set of vertices that belong to $N(u)$ and $N[u]$, respectively, are referred to as the neighbours and closed neighbours of $u$. The open neighbourhood of a set $T \subseteq V$ is denoted by $N(T)$ and the closed neighbourhood by $N[T]$. $N(T)$ = $\bigcup\limits_{u \in T}^{} N(u)$ and $N[T]$ = $\bigcup\limits_{u \in T}^{} N[u]$. For any vertex $u \in V$, we use the term \textbf{label} to denote the value assigned to $u$ by the function $f$, i.e., $f(u)$. Similarly, for any vertex $u \in V$, we define the term \textbf{labelSum} as the sum of the labels of all the vertices in the closed neighbourhood of $u$, i.e., $\sum_{v \in N[u]} f(v)$. For a vertex subset $T \subseteq V$, the \textbf{weight} of $T$, denoted by $f(T)$, is the sum of the labels of all the vertices in $T$, i.e., $\sum_{u \in T} f(u)$. We say that a vertex $u \in V$ with $f(u)$ $\in$ $\{0,1\}$ is \textit{dominated} if it satisfies the Roman \{3\}-dominating function constraint.

A problem is considered to be \textit{fixed-parameter tractable} w.r.t. a parameter $k$, if there exists an algorithm with running time $\mathcal{O}(f(k)n^{\mathcal{O}(1)})$, where $f$ is some computable function. A problem is said to be W[1]-hard with respect to a parameter if it is believed not to be fixed-parameter tractable (FPT), serving as an analogue to NP-hardness in classical complexity theory. 
For more information on \textit{graph thery} and \textit{parameterized complexity}, we refer the reader to~\cite{west} and~\cite{MC}, respectively.

Now, we define the graph classes that are used in this paper. The graph class \textit{split graphs} is defined as follows. 
\begin{definition} [Split graphs]
A graph $G = (V, E)$ is a \textit{split graph} if the vertex set $V$ can be partitioned into two sets $V_1$ and $V_2$, such that the induced graph $G[V_1]$ is a complete graph and the induced graph $G[V_2]$ is an independent set.
\end{definition}
The graph class \textit{block graphs} is defined as follows. 
\begin{definition} [Block graphs]
A vertex $u$ is a cut-vertex of $G$ if the number of components in $G-\{u\}$ is higher than that of $G$. A block of $G$ is a maximal connected subgraph of $G$ that does not have a cut-vertex. A \textit{block graph} is a simple graph in which every block is a complete graph.
\end{definition}
 
Let \( \{c_1, c_2, \ldots, c_\ell\} \) be the set of cut-vertices and \( \{\mathcal{B}_1, \mathcal{B}_2, \ldots, \mathcal{B}_r\} \) be the set of blocks of a block graph \( G \). The \emph{cut-tree} of \( G \), denoted by \( T_G \), is defined as the graph with vertex set
$V(T_G)$ = $\{\mathcal{B}_1, \mathcal{B}_2, \ldots, \mathcal{B}_r, c_1, c_2, \ldots, c_\ell\}$
and an edge set
$E(T_G)$ = $\{\mathcal{B}_i c_j : c_j \in \mathcal{B}_i,\ i \in [r],\ j \in [\ell]\}.$ A block graph $G$ and its corresponding \textit{cut-tree} $T_G$ are given in Fig. \ref{fig:fig1}. An \textit{end block} of a block graph is a block that contains exactly one cut-vertex.
\begin{figure} 
    \centering \vspace{-4mm}
    \begin{tikzpicture} [scale =0.52]
        \filldraw (0, 2) circle (3pt) node[anchor=south]{};
        \filldraw (0, 4) circle (3pt) node[anchor=south]{};
        \filldraw (2, 3) circle (3pt) node[anchor=south]{};
        \filldraw (2, 5) circle (3pt) node[anchor=south]{};
        
        \filldraw (4, 2) circle (3pt) node[anchor=south]{};
        \filldraw (4, 4) circle (3pt) node[anchor=south]{};
        \filldraw (6, 3) circle (3pt) node[anchor=south]{};
        \filldraw (8, 3) circle (3pt) node[anchor=south]{};

        \filldraw (0, 2) circle (3pt) node[anchor=south]{};
        \filldraw (0, 4) circle (3pt) node[anchor=south]{};
        \filldraw (2, 3) circle (3pt) node[anchor=south]{};
        \filldraw (2, 5) circle (3pt) node[anchor=south]{};
        
        \filldraw (4, 2) circle (3pt) node[anchor=south]{};
        \filldraw (4, 4) circle (3pt) node[anchor=south]{};
        \filldraw (6, 3) circle (3pt) node[anchor=south]{};
        \filldraw (8, 3) circle (3pt) node[anchor=south]{};

        \draw [thin] (0, 2) -- (0, 4);\draw [thin] (0, 2) -- (2, 3);
        \draw [thin] (0, 4) -- (2, 3);\draw [thin] (2, 3) -- (2, 5);
        \draw [thin] (2, 3) -- (4, 4);\draw [thin] (2, 3) -- (4, 2);
        \draw [thin] (2, 3) -- (6, 3);\draw [thin] (4, 4) -- (4, 2);
        \draw [thin] (4, 4) -- (6, 3);\draw [thin] (4, 2) -- (6, 3);
        \draw [thin] (6, 3) -- (8, 3);

        \filldraw (13, 4) circle (3pt) node[anchor=south]{};
        \filldraw (15, 4) circle (3pt) node[anchor=south]{};
        
        \filldraw (12, 2) circle (3pt) node[anchor=south]{};
        \filldraw (14, 2) circle (3pt) node[anchor=south]{};
        \filldraw (16, 2) circle (3pt) node[anchor=south]{};
        \filldraw (14, 6) circle (3pt) node[anchor=south]{};

        \draw [thin] (13, 4) -- (14, 6);\draw [thin] (14, 6) -- (15, 4);
        \draw [thin] (15, 4) -- (16, 2);\draw [thin] (13, 4) -- (12, 2);
        \draw [thin] (13, 4) -- (14, 2);

        \filldraw (-0.5, 1.75) circle (0pt) node[anchor=south]{$v_1$};
        \filldraw (-0.5, 3.75) circle (0pt) node[anchor=south]{$v_2$};
        \filldraw (2, 2.05) circle (0pt) node[anchor=south]{$v_3$};
        \filldraw (4, 4.15) circle (0pt) node[anchor=south]{$v_6$};
        \filldraw (6, 3.15) circle (0pt) node[anchor=south]{$v_8$};
        \filldraw (8, 3.15) circle (0pt) node[anchor=south]{$v_9$};
        \filldraw (4, 1.15) circle (0pt) node[anchor=south]{$v_7$};
        \filldraw (2, 5.15) circle (0pt) node[anchor=south]{$v_4$};
        
        \filldraw (11.5, 1.8) circle (0pt) node[anchor=south]{$\mathcal{B}_1$};
        \filldraw (14.5, 1.8) circle (0pt) node[anchor=south]{$\mathcal{B}_2$};
        \filldraw (12.6, 3.9) circle (0pt) node[anchor=south]{$v_3$};
        \filldraw (15.4, 3.9) circle (0pt) node[anchor=south]{$v_8$};
        \filldraw (14, 6.1) circle (0pt) node[anchor=south]{$\mathcal{B}_3$};
        \filldraw (16.5, 1.8) circle (0pt) node[anchor=south]{$\mathcal{B}_4$};
        
        \filldraw (3.5, 0.15) circle (0pt) node[anchor=south]{$G$};
        \filldraw (14, 0.15) circle (0pt) node[anchor=south]{$T_G$};
    \end{tikzpicture} \vspace{-1mm}
    \caption{Block graph $G$ and the corresponding \textit{cut-tree} $T_G$}
    \label{fig:fig1}
\end{figure}
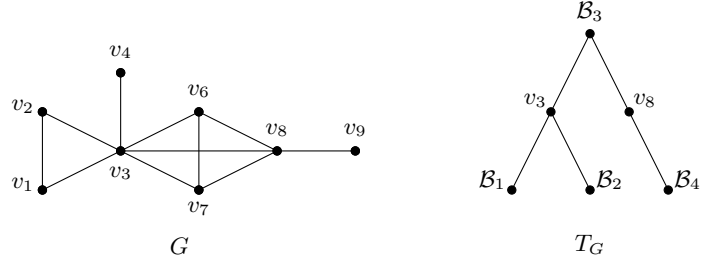
\section{NP-complete on split graphs}
R3D was proved to be NP-complete on chordal graphs~\cite{chakradhar2022algorithmic,CHAUDHARY2024301}. In this section, we strengthen the known result on chordal graphs by proving that R3D is NP-complete on split graphs, a subclass of chordal graphs. We provide a polynomial-time reduction from the \ethc{} (X3C) problem to prove that R3D is NP-complete on split graphs. The X3C problem is defined as follows.

\medskip
\noindent \ethc{}: \vspace{2mm} \newline
\textit{Input:} A set $\overline{X}$ with $|\overline{X}| = 3q$ and a collection $\overline{C}$ of 3-element subsets of $\overline{X}$.\vspace{1mm} \\ 
\textit{Output:} YES, if there exists a collection of sets $\overline{C'}$ of $\overline{C}$ such that every element in $\overline{X}$ occurs in
exactly one member of $\overline{C'}$; NO otherwise.

\medskip
\noindent The complexity result related to X3C is given as follows.
\begin{theorem}[\cite{johnson1979computers}]~\label{etcresult}
    X3C is NP-complete.
\end{theorem}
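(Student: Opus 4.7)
The plan is to establish membership in NP and then give a polynomial-time reduction from a known NP-complete problem, for which the natural choice is 3-Dimensional Matching (3DM). Membership is immediate: a candidate subcollection $\overline{C'} \subseteq \overline{C}$ serves as a certificate, and verifying that $|\overline{C'}| = q$ and that each element of $\overline{X}$ appears in exactly one triple of $\overline{C'}$ takes time linear in $|\overline{X}| + |\overline{C}|$.

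For hardness I would reduce from 3DM, which is classical and listed as NP-complete in the same Garey--Johnson reference. Recall the 3DM instance consists of three pairwise disjoint sets $W, X, Y$, each of size $q$, and a collection $M \subseteq W \times X \times Y$ of triples; the question is whether there is a subcollection $M' \subseteq M$ of size $q$ such that every element of $W \cup X \cup Y$ lies in exactly one triple of $M'$. The reduction simply forgets the tripartite structure: set $\overline{X} := W \cup X \cup Y$, so $|\overline{X}| = 3q$, and let $\overline{C} := \{ \{w_i, x_i, y_i\} : (w_i, x_i, y_i) \in M \}$. Because $W, X, Y$ are disjoint, each such set really is a $3$-element subset of $\overline{X}$, and the construction is clearly polynomial.

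Correctness is symmetric. If $M'$ is a perfect 3-matching in the 3DM instance, then reading each of its triples as an unordered set gives an exact 3-cover $\overline{C'}$ of $\overline{X}$. Conversely, any exact 3-cover $\overline{C'}$ must have exactly $q$ triples (since $3q$ elements are covered disjointly by 3-sets); lifting each triple back to the unique element of $W \times X \times Y$ from which it originated yields a perfect 3-matching in $M$. This establishes the equivalence between YES-instances and completes the reduction.

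The main obstacle is not in the present reduction, which is essentially a relabeling, but in justifying that 3DM is NP-complete in the first place; that step is the genuinely substantive part and requires the classical gadget-based reduction from 3SAT (using component, choice, and garbage-collection gadgets) found in Garey--Johnson. In the context of this paper it is legitimate to invoke Theorem~\ref{etcresult} directly via the citation, so the sketch above is only for completeness and to make clear why the appeal to the literature is safe.
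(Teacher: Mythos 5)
The paper does not prove this statement at all: it is imported verbatim from Garey and Johnson via the citation, and the reduction in the paper starts \emph{from} X3C rather than establishing its hardness. So there is no ``paper's proof'' to compare against, and your sketch is strictly more than what the paper supplies. That said, your argument is the standard and correct one. Membership in NP is as you describe. The reduction from 3DM is sound: since $W$, $X$, $Y$ are pairwise disjoint, each ordered triple in $M$ determines a distinct $3$-element subset of $\overline{X} = W \cup X \cup Y$ and can be uniquely recovered from it (each element's block of the partition identifies its coordinate), so exact covers and perfect matchings are in bijection; this is exactly the ``restriction'' argument Garey and Johnson use to list X3C as NP-complete. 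You correctly identify that the only substantive content is the NP-completeness of 3DM itself, which both you and the paper legitimately defer to the reference. No gaps.
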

Consider an instance $\langle \overline{X}, \overline{C}\rangle$ of X3C, where $\overline{X} = \{\overline{x}_1, \overline{x}_2, ...,\overline{x}_{3q}\}$ and $\overline{C} = \{\overline{C}_1, \overline{C}_2, ...,\overline{C}_t\}$. For the sake of generality, we assume that $q$ is even. This is valid since any odd value of $q$ can be made even by appending three dummy elements $\overline{x}_{3q+1}, \overline{x}_{3q+2}$ and $\overline{x}_{3q+3}$ to the set $\overline{X}$ and by augmenting $\overline{C}$ with an additional 3-element subset $\overline{C}_{t+1}$ $=$ $\{\overline{x}_{3q+1}, \overline{x}_{3q+2}, \overline{x}_{3q+3}\}$. From an instance $\langle \overline{X}, \overline{C}\rangle$ of X3C, we construct an instance $I = (G, k)$ of R3D as follows.
\begin{itemize}
    \item For each $\overline{x}_i\in \overline{X}$, we introduce a vertex $x_i$ in $G$. Similarly, for each $\overline{C}_j \in \overline{C}$, we introduce a vertex $c_j$ in $G$. Let $X = \{x_1, x_2, ..., x_{3q}\}$ and $C = \{c_1, c_2, ..., c_t\}$.
    \item We create two additional copies of the vertex set $X$, denoted by, $A$ and $B$. A copy of vertex $x_i \in X$ is denoted by $a_i$ in $A$ and $b_i $ in $B$.
    \item We create two sets $Y$ and $Z$ of size $10q$ each. Let $Y = \{y_1,y_2,...,y_{10q}\}$ and $Z = \{z_1,z_2,...,z_{10q}\}$.
    \item For each $x_i \in X$ and $c_j \in C$, we add an edge between $x_i$ and $c_j$ if and only if $\overline{x}_i \in \overline{C}_j$.
    \item We make vertex $x_i$ adjacent to both $a_i$ and $b_i$, for each $i \in [3q]$.
    \item We construct the adjacency between the vertices of the sets $A$ and $Z$ as follows. We arbitrarily select a subset $A_i \subseteq A$ of six vertices, and a subset $Z_j \subseteq Z$ of twenty vertices. We make each vertex in $Z_j$ adjacent to a unique triplet of vertices in $A_i$. This process is repeated for each such pair of subsets $A_i$ and $Z_j$ chosen in successive iterations. Importantly, in each iteration, the sets $A_i$ and $Z_j$ should not include any vertices from $A$ and $Z$, respectively, that were selected in previous iterations. In other words, vertices in the sets $A \cup Z$ must participate exactly once in this adjacency construction. In a similar manner, we establish the adjacency between the sets $B$ and $Y$.
    \item Additionally, we make the set $A \cup B \cup C$ a clique.
\end{itemize}
\begin{figure} [t]
    \centering
     \begin{tikzpicture} [thick,scale=1.5, every node/.style={scale=0.95}]
     \draw (0, 0) ellipse (1 and 2);
        \draw (4,0) ellipse (1 and 2);
        \draw[thin, gray] (-0.97, 0.6) -- (0.97, 0.6);
        \draw[thin, gray] (-0.95, -0.7) -- (0.95, -0.7);
        \draw[thin, gray] (3.05, 0.6) -- (4.95, 0.6);
        \draw[thin, gray] (3.05, -0.6) -- (4.95, -0.6);

        \draw[blue] (4, 1.75) -- (0, 0.4);
        \draw[thin, gray] (4, 1.575) -- (0, 0.225);
        \draw[thin, gray] (4, 1.4) -- (0, -0.125);
        \draw[thin, gray] (4, 1.225) -- (0, -0.3);
        \draw[thin, gray] (4, 1.05) -- (0, 0.05);
        \draw[thin, gray] (4, 0.875) -- (0, -0.475);

        \draw[red] (4, 0.375) -- (0, 0.4);
        \draw[red] (4, 0.375) -- (0, 0.225);
        \draw[red] (4, 0.375) -- (0, 0.05);        
        \draw[thin, gray] (4, 0.2) -- (0, 0.4);
        \draw[thin, gray] (4, 0.2) -- (0, 0.225);
        \draw[thin, gray] (4, 0.2) -- (0, -0.125);
        \draw[thin, gray] (4, -0.4) -- (0, -0.3);
        \draw[thin, gray] (4, -0.4) -- (0, -0.125);
        \draw[thin, gray] (4, -0.4) -- (0, -0.475);
        
        \draw[mynicegreen] (4, -0.875) -- (0, -0.875);
        \draw[mynicegreen] (4, -0.875) -- (0, -1.05);
        \draw[mynicegreen] (4, -0.875) -- (0, -1.225);
        \draw[thin, gray] (4, -1.0625) -- (0, -0.875);
        \draw[thin, gray] (4, -1.0625) -- (0, -1.05);
        \draw[thin, gray] (4, -1.0625) -- (0, -1.4);
        \draw[thin, gray] (4, -1.7875) -- (0, -1.4);
        \draw[thin, gray] (4, -1.7875) -- (0, -1.575);
        \draw[thin, gray] (4, -1.7875) -- (0, -1.75);

        \draw[blue] (4, 1.75) -- (0, -0.875);
        \draw[thin, gray] (4, 1.575) -- (0, -1.05);
        \draw[thin, gray] (4, 1.4) -- (0, -1.225);
        \draw[thin, gray] (4, 1.225) -- (0, -1.4);
        \draw[thin, gray] (4, 1.05) -- (0, -1.575);
        \draw[thin, gray] (4, 0.875) -- (0, -1.75);

        \draw[blue] (4, 1.75) -- (0, 1.7);
        \draw[blue] (4, 1.75) -- (0, 1.4);
        \draw[thin, gray] (4, 1.575) -- (0, 1.7);
        \draw[thin, gray] (4, 1.575) -- (0, 1.4);
        \draw[thin, gray] (4, 1.575) -- (0, 0.8);
        \draw[thin, gray] (4, 1.4) -- (0, 1.7);
        \draw[thin, gray] (4, 1.225) -- (0, 1.4);
        \draw[thin, gray] (4, 1.225) -- (0, 1.1);
        \draw[thin, gray] (4, 1.05) -- (0, 1.1);
        \draw[thin, gray] (4, 1.05) -- (0, 0.8);
        \draw[thin, gray] (4, 0.875) -- (0, 1.1);
        \draw[thin, gray] (4, 0.875) -- (0, 0.8);   

        \filldraw (0, 1.7) circle (1pt) node[anchor=south]{};
        \filldraw (0, 1.1) circle (1pt) node[anchor=south]{};
        \filldraw (0, 1.4) circle (1pt) node[anchor=south]{};
        \filldraw (0, 0.8) circle (1pt) node[anchor=south]{};

        \filldraw (-0.25, 1.55) circle (0cm) node[anchor=south]{$c_1$};
        \filldraw (-0.25, 1.25) circle (0cm) node[anchor=south]{$c_2$};
        \filldraw (-0.25, 0.95) circle (0cm) node[anchor=south]{$c_3$};
        \filldraw (-0.25, 0.65) circle (0cm) node[anchor=south]{$c_4$};  

        \filldraw (4.25, 1.6) circle (0cm) node[anchor=south]{$x_1$};
        \filldraw (4.25, 1.425) circle (0cm) node[anchor=south]{$x_2$};
        \filldraw (4.25, 1.25) circle (0cm) node[anchor=south]{$x_3$};
        \filldraw (4.25, 1.075) circle (0cm) node[anchor=south]{$x_4$};
        \filldraw (4.25, 0.9) circle (0cm) node[anchor=south]{$x_5$};
        \filldraw (4.25, 0.725) circle (0cm) node[anchor=south]{$x_6$};

        \filldraw (-0.25, 0.275) circle (0cm) node[anchor=south]{$b_1$};
        \filldraw (-0.25, 0.1) circle (0cm) node[anchor=south]{$b_2$};
        \filldraw (-0.25, -0.075) circle (0cm) node[anchor=south]{$b_3$};
        \filldraw (-0.25, -0.25) circle (0cm) node[anchor=south]{$b_4$};
        \filldraw (-0.25, -0.425) circle (0cm) node[anchor=south]{$b_5$};
        \filldraw (-0.25, -0.6) circle (0cm) node[anchor=south]{$b_6$};     

        \filldraw (-0.25, -0.975) circle (0cm) node[anchor=south]{$a_1$};
        \filldraw (-0.25, -1.15) circle (0cm) node[anchor=south]{$a_2$};
        \filldraw (-0.25, -1.325) circle (0cm) node[anchor=south]{$a_3$};
        \filldraw (-0.25, -1.5) circle (0cm) node[anchor=south]{$a_4$};
        \filldraw (-0.25, -1.675) circle (0cm) node[anchor=south]{$a_5$};
        \filldraw (-0.25, -1.85) circle (0cm) node[anchor=south]{$a_6$}; 
        
        \filldraw (4.25, 0.25) circle (0cm) node[anchor=south]{$y_1$};
        \filldraw (4.25, 0.05) circle (0cm) node[anchor=south]{$y_2$};
        \filldraw (4.25, -0.525) circle (0cm) node[anchor=south]{$y_{20}$};
        \filldraw (4.25, -1.2) circle (0cm) node[anchor=south]{$z_2$};
        \filldraw (4.25, -1) circle (0cm) node[anchor=south]{$z_1$};
        \filldraw (4.25, -1.9) circle (0cm) node[anchor=south]{$z_{20}$};
        
        \filldraw (0, 0.4) circle (1pt) node[anchor=south]{};
        \filldraw (0, 0.225) circle (1pt) node[anchor=south]{};
        \filldraw (0, 0.05) circle (1pt) node[anchor=south]{};
        \filldraw (0, -0.125) circle (1pt) node[anchor=south]{};
        \filldraw (0, -0.3) circle (1pt) node[anchor=south]{};
        \filldraw (0, -0.475) circle (1pt) node[anchor=south]{};

        \filldraw (0, -0.875) circle (1pt) node[anchor=south]{};
        \filldraw (0, -1.05) circle (1pt) node[anchor=south]{};
        \filldraw (0, -1.225) circle (1pt) node[anchor=south]{};
        \filldraw (0, -1.4) circle (1pt) node[anchor=south]{};
        \filldraw (0, -1.575) circle (1pt) node[anchor=south]{};
        \filldraw (0, -1.75) circle (1pt) node[anchor=south]{};

        \filldraw (4, 1.75) circle (1pt) node[anchor=south]{};
        \filldraw (4, 1.575) circle (1pt) node[anchor=south]{};
        \filldraw (4, 1.4) circle (1pt) node[anchor=south]{};
        \filldraw (4, 1.225) circle (1pt) node[anchor=south]{};
        \filldraw (4, 1.05) circle (1pt) node[anchor=south]{};
        \filldraw (4, 0.875) circle (1pt) node[anchor=south]{};
        
        \filldraw (4, 0.375) circle (1pt) node[anchor=south]{};
        \filldraw (4, 0.2) circle (1pt) node[anchor=south]{};
        \filldraw (4, -0.4) circle (1pt) node[anchor=south]{};

        \filldraw (4, -0.025) circle (0.5pt) node[anchor=south]{};
        \filldraw (4, -0.1) circle (0.5pt) node[anchor=south]{};
        \filldraw (4, -0.175) circle (0.5pt) node[anchor=south]{};

        \filldraw (4, -1.05) circle (1pt) node[anchor=south]{};
        \filldraw (4, -0.875) circle (1pt) node[anchor=south]{};
        \filldraw (4, -1.8) circle (1pt) node[anchor=south]{};
        
        \filldraw (4, -1.375) circle (0.5pt) node[anchor=south]{};
        \filldraw (4, -1.45) circle (0.5pt) node[anchor=south]{};
        \filldraw (4, -1.525) circle (0.5pt) node[anchor=south]{};
        
        \filldraw (-1.2, 1) circle (0cm) node[anchor=south]{$C$};
        \filldraw (-1.3, -0.2) circle (0cm) node[anchor=south]{$B$};
        \filldraw (-1.2, -1.6) circle (0cm) node[anchor=south]{$A$};
        \filldraw (0.1, -2.5) circle (0cm) node[anchor=south]{clique};
        \filldraw (4.2, -2.5) circle (0cm) node[anchor=south]{independent set};
        \filldraw (5.2, 1) circle (0cm) node[anchor=south]{$X$};
        \filldraw (5.3, -0.2) circle (0cm) node[anchor=south]{$Y$};
        \filldraw (5.2, -1.6) circle (0cm) node[anchor=south]{$Z$};
        \filldraw (2, -3.1) circle (0cm) node[anchor=south]{$G$};
     \end{tikzpicture}
    \caption{Reduced instance $G$ for an instance of X3C where $\overline{X} = \{1,2,3,4,5,6\}$ and $\overline{C}$ = $\{\overline{C_1} =  \{1,2,3\}$, $\overline{C_2} =  \{1,2,4\}$, $\overline{C_3} =  \{4,5,6\}$ and $\overline{C_4} =  \{2,5,6\}\}$. Edges between the vertices of the clique $A \cup B \cup C$ are not shown in the figure. The edges incident on the vertices $x_1$, $y_1$ and $z_1$ are colored in blue, red and green respectively.}
    \label{fig:fig2} \vspace{-3mm}
\end{figure}
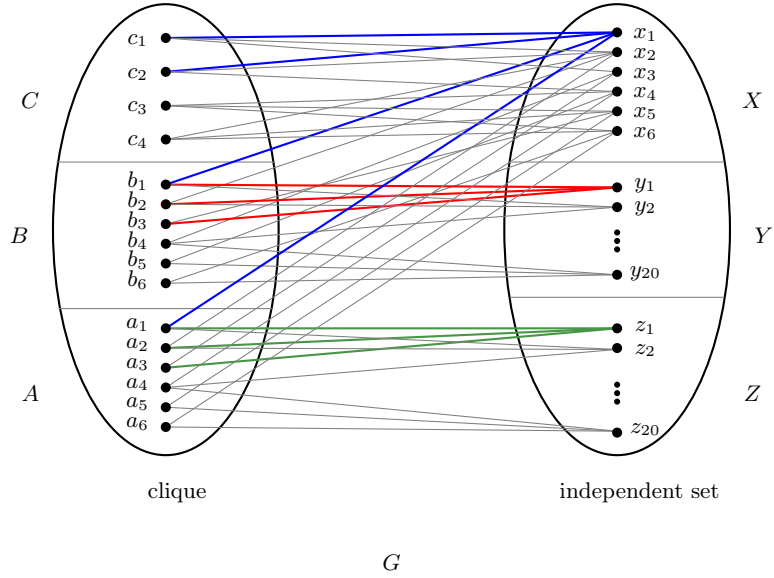
This concludes the construction of $G$. See Fig. \ref{fig:fig2} for an illustration. \vspace{2mm} \\
Observe that $G$ is a split graph, where the subgraph induced by $A \cup B \cup C$ forms a clique and the set $X \cup Y \cup Z$ is an independent set in $G$. 
\begin{lemma}~\label{lemma1}
    Let $f$ be a minimum weighted Roman \{3\}-dominating function for $G$. Then $f(A) \geq 3q$ and $f(B) \geq 3q$.
\end{lemma}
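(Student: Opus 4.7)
The plan is to show that $f(A_i) \geq 6$ for each of the $q/2$ groups $A_i$ in the partition of $A$ produced by the construction; summing over groups then yields $f(A) \geq 3q$, and the bound $f(B) \geq 3q$ follows by the parallel argument applied to the analogous construction for $B$ and $Y$. Fix a pair $(A_i, Z_j)$ and set $s := f(A_i)$. The key structural fact is that every $z \in Z_j$ has $N(z)$ equal to a unique $3$-subset of $A_i$ and no other neighbors, so the Roman $\{3\}$-domination constraint at $z$ depends only on $f(z)$ and on $f$ restricted to $A_i$.

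My starting point is a pointwise inequality: $\sum_{v \in N(z)} f(v) + 2 f(z) \geq 3$ for every $z \in Z_j$, which is immediate from a case check on $f(z) \in \{0, 1, 2, 3\}$ against the Roman $\{3\}$-domination requirement. Summing over $z \in Z_j$ and observing that each vertex of $A_i$ appears in exactly $\binom{5}{2} = 10$ triplets gives $10 s + 2 f(Z_j) \geq 60$, i.e., $f(Z_j) \geq 30 - 5 s$.

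I would now argue by contradiction: suppose $s \leq 5$. Define $f'$ by $f'(v) := \max(f(v), 1)$ for $v \in A_i$, $f'(z) := 0$ for $z \in Z_j$, and $f' := f$ elsewhere. Checking that $f'$ is a valid Roman $\{3\}$-dominating function has three parts. Every $z \in Z_j$ now has all three triplet-neighbors labeled $\geq 1$, giving $\sum_{v \in N(z)} f'(v) \geq 3$. Every $v \in A_i$ satisfies $f'(v) \geq 1$ and, because $A \cup B \cup C$ is a clique, has at least five neighbors within $A_i \setminus \{v\}$ each with label $\geq 1$, so its labelSum is $\geq 5$. Every external vertex sees only non-decreasing labels, since $f' \geq f$ on $A_i$ and $Z_j$ has no neighbors outside $A_i$. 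The weight change is $w(f') - w(f) = |\{v \in A_i : f(v) = 0\}| - f(Z_j)$. Since every non-zero label is at most $3$, we have $|\{v \in A_i : f(v) = 0\}| \leq 6 - \lceil s/3 \rceil$, and a direct arithmetic check shows $6 - \lceil s/3 \rceil < 30 - 5 s$ for every $s \in \{0, 1, \ldots, 5\}$. Combined with $f(Z_j) \geq 30 - 5s$, this gives $w(f') < w(f)$, contradicting the minimality of $f$.

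The main subtlety is isolating the right averaging inequality: the combination $\sum_{v \in N(z)} f(v) + 2 f(z) \geq 3$ uniformly captures all four Roman $\{3\}$ cases, and because each $A_i$-vertex lies in exactly $10$ triplets it produces a sharp enough lower bound on $f(Z_j)$ to beat the maximum possible number of zero labels in $A_i$. Once this is in place, the local modification is easy to justify because $Z_j$ is independent with neighbors only in $A_i$, so no external labelSum can be damaged.
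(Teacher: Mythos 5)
Your proof is correct, and it is in fact more rigorous than the one in the paper. The paper's argument simply asserts that ``optimally, no vertex from $Z$ gets a label from $\{1,2,3\}$,'' then deduces that every triple of $A_i$ must have label sum at least $3$ and concludes $f(A_i)\geq 6$; the justification for discarding positive labels on $Z$ is left implicit. You avoid that assertion entirely: your uniform inequality $\sum_{v\in N(z)}f(v)+2f(z)\geq 3$ holds for every $z\in Z_j$ regardless of $f(z)$, and averaging it over the $20$ triples (each $A_i$-vertex lying in $\binom{5}{2}=10$ of them) yields $f(Z_j)\geq 30-5s$ with $s=f(A_i)$. The explicit exchange argument --- raise every $A_i$-label to at least $1$, zero out $Z_j$, and check that the gain $|\{v\in A_i: f(v)=0\}|\leq 6-\lceil s/3\rceil$ is strictly smaller than the saving $30-5s$ for all $s\leq 5$ --- is exactly the missing step that makes the paper's ``optimally'' precise, and your verification that the modified function remains a valid Roman $\{3\}$-dominating function (using that $Z_j$ has no neighbours outside $A_i$ and that $A_i$ lies in the clique) is sound. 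Both arguments rest on the same combinatorial core (the $10$-fold incidence of $A_i$-vertices with the triples), so the approaches are essentially the same in spirit, but yours buys a complete proof of the reduction step at the cost of a short case analysis, whereas the paper's shorter version leaves a genuine gap for the reader to fill.
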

\begin{proof}
    Let $A_i = \{a_1, a_2, a_3, a_4, a_5, a_6\}$ be a six-vertex subset of $A$. By construction, set $A_i$ is associated with $Z_j = \{z_1, z_2,...,z_{20}\}$ in $Z$. Each vertex in $Z_j$ is adjacent to a distinct triple of vertices from $A_i$. Importantly, the vertices of $Z$ are adjacent only to the vertices of $A$. Therefore, optimally, no vertex from $Z$ gets a label from $\{1,2,3\}$. To ensure the required labelSum for the vertices in $Z_j$, every triple $(a_i, a_j, a_k)$ from $A_i$ must satisfy $f(a_i)+f(a_j)+f(a_k) \geq 3$. The argument holds for every six-vertex subset of $A$ and $B$. Therefore, we conclude that $f(A) \geq 3q$ and similarly, we have $f(B) \geq 3q$. \qed
\end{proof}
\begin{lemma}~\label{lemma2}
    $\langle \overline{X}, \overline{C}\rangle $ is a yes-instance of X3C if and only if $I$ has a Roman \{3\}-dominating function of weight $7q$.
\end{lemma}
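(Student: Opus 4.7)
The plan is to prove both directions using Lemma \ref{lemma1} together with a double counting argument over $x_1,\ldots,x_{3q}$.

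For the forward direction, I would take an exact 3-cover $\overline{C'}$ of size $q$ and define $f(v)=1$ for every $v\in A\cup B$, $f(c_j)=1$ for each $\overline{C}_j\in\overline{C'}$, and $f(v)=0$ otherwise. The weight is then $3q+3q+q=7q$. Routine verification shows the R3D constraints hold: each $z\in Z$ (resp.\ $y\in Y$) has its three $A$-neighbours (resp.\ $B$-neighbours) labelled $1$, giving labelSum $3$; each $x_i$ has $f(a_i)+f(b_i)=2$ together with a contribution of $1$ from the unique $c_j\in C'$ covering $\overline{x}_i$, yielding labelSum $3$; and every vertex of the clique $A\cup B\cup C$ has very large labelSum via its clique neighbours.

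For the backward direction, let $f$ be an R3D function of weight $7q$. Lemma \ref{lemma1} gives $f(A)+f(B)\geq 6q$, hence $f(X)+f(Y)+f(Z)+f(C)\leq q$. I would then double count $\sum_{i=1}^{3q} f(N[x_i])$. Writing $m=|\{i:f(x_i)\geq 2\}|$, the R3D condition forces $f(N[x_i])\geq 3$ when $f(x_i)\leq 1$ and $f(N[x_i])\geq 2$ otherwise, so summing gives $\sum_i f(N[x_i])\geq 9q-m$. On the other hand, using $|N(c_j)\cap X|=3$ and the fact that $a_i$ and $b_i$ are the unique $A$- and $B$-neighbours of $x_i$,
\[
\sum_i f(N[x_i]) \;=\; f(X)+f(A)+f(B)+3f(C) \;=\; 7q + 2f(C) - f(Y) - f(Z).
\]
Combining this with $f(X)\geq 2m$ and $f(X)+f(Y)+f(Z)+f(C)\leq q$ yields two opposing inequalities which, after subtraction, force $m=0$, $f(X)=f(Y)=f(Z)=0$, $f(C)=q$, and $f(A)=f(B)=3q$.

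With $f(A)=3q$ tight, every six-vertex set $A_i$ from the construction must sum to exactly $6$ while every triple inside $A_i$ sums to at least $3$. A short case analysis (no vertex of $A_i$ may carry label $0$, for otherwise the other five would need total weight at least $8$) forces every label in $A$ to be $1$; the same holds for $B$. Therefore each $x_i$, which has $f(x_i)=0$, satisfies $2+\sum_{c_j\ni\overline{x}_i} f(c_j)\geq 3$, and so some $c_j$ adjacent to $x_i$ has $f(c_j)\geq 1$. Defining $\overline{C'}=\{\overline{C}_j:f(c_j)\geq 1\}$ gives a cover of $\overline{X}$; since $|\overline{C'}|\leq f(C)=q$ while $3|\overline{C'}|\geq|\overline{X}|=3q$, the cover has exactly $q$ sets and is hence exact. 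The main obstacle I anticipate is executing the double-counting manipulation so that it simultaneously pins down all of $m$, $f(X)$, $f(Y)$, $f(Z)$, $f(C)$, and thereby the forced all-ones labelling on $A\cup B$.
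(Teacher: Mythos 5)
Your proposal is correct. The forward direction coincides with the paper's construction exactly. For the backward direction you follow the same skeleton --- invoke Lemma~\ref{lemma1} to reserve weight $6q$ on $A\cup B$ and force the remaining budget of $q$ onto $C$ --- but you execute it by a genuinely different and tighter argument. The paper reasons informally: since each $x_i$ sees only one vertex of $A$ and one of $B$, raising labels inside $A\cup B$ to dominate $X$ would cost $3q$ against a budget of $q$, hence every vertex of $A\cup B$ ``gets a label of 1'' and the leftover $q$ is spent on $q$ vertices of $C$. Your double count of $\sum_i f(N[x_i])$ replaces this with an airtight computation: the identity $f(X)+f(A)+f(B)+3f(C)=7q+2f(C)-f(Y)-f(Z)$ against the lower bound $9q-m$, combined with $f(X)\ge 2m$ and $f(X)+f(Y)+f(Z)+f(C)\le q$, does collapse to $m+f(Y)+f(Z)\le 0$ and hence pins down $f(X)=f(Y)=f(Z)=0$, $f(C)=q$ and $f(A)=f(B)=3q$ in one stroke. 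Your tightness analysis on each six-vertex block $A_i$ --- every triple sums to at least $3$ because all $\binom{6}{3}=20$ triples occur as neighbourhoods of label-$0$ vertices of $Z_j$, and a label-$0$ vertex of $A_i$ would force the remaining five to total at least $8>6$ --- supplies a step the paper glosses over, namely why $f(A)=3q$ forces the all-ones labelling rather than some mix of $0$'s and $2$'s. The final extraction of an exact cover from $\overline{C'}=\{\overline{C}_j : f(c_j)\ge 1\}$ via $|\overline{C'}|\le f(C)=q$ and $3|\overline{C'}|\ge 3q$ is also sound. The ``main obstacle'' you flag is not one: the manipulation goes through exactly as you sketched, and what your route buys over the paper's is a fully quantitative backward direction at the cost of somewhat more bookkeeping.
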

\begin{proof}
    $[\Rightarrow]$ Let $\overline{C'} \subseteq \overline{C}$ be a solution to a yes-instance $\langle \overline{X}, \overline{C}\rangle $ of X3C. We construct a Roman \{3\}-dominating function $f$ as follows.
        \[
f(v) = 
\begin{cases}
0, & \text{if } v \in X \cup Y \cup Z  \cup \bigcup_{\overline{C}_j \notin \overline{C'}} c_j, \\
1, & \text{if } v \in A \cup B \cup \bigcup_{\overline{C}_j \in \overline{C'}} c_j, \\
2, & \text{if } v \in \emptyset, \\
3, & \text{if } v \in \emptyset
\end{cases}
\]
Each vertex in $X \cup Y \cup Z$ is dominated because it has three neighbours in $A \cup B \cup C$ with a label of 1. Since $A \cup B \cup C$ forms a clique, every vertex in it is dominated. The weight of $A \cup B$ is $6q$ and the weight of $C$ is $q$. Hence, we conclude that $f$ is a Roman \{3\}-dominating function of weight $7q$. \vspace{2mm} \\
$[\Leftarrow]$ From Lemma \ref{lemma1}, we have that $f(A \cup B) \geq $ $6q$. At this point, each vertex of the set $A \cup B \cup C \cup Y \cup Z$ is dominated. The vertices of $X$ are yet to be dominated. Given that the overall weight of the Roman \(\{3\}\)-dominating function is \( 7q \), this leaves at most a weight of \( q \) for the remaining vertices in \( X \cup C \). To ensure that all \( 3q \) vertices in \( X \) are dominated, we must increase the labelSum of each vertex by at least one. As each vertex $x_i \in X$ is adjacent to exactly one vertex from both $A$ and $B$, we cannot afford to increase the label of any vertex from $A \cup B$ to dominate the vertices of $X$. This is true because we have $3q$ vertices in $X$ to dominate and only a weight of $q$. Hence, each vertex in $A \cup B$ gets a label of 1. Thus, the remaining weight available for the set $X \cup C$ is exactly $q$.

Note that each vertex \( c_j \in C \) is adjacent to exactly three vertices in \( X \) and labeling \( c_j \) with 1, will increase the labelSum of all three of its neighbors by one.
Since we have a weight of \( q \) left to consume over \( X \cup C \), and we need to increase the labelSum of all \( 3q \) vertices in \( X \) by one. The only way is to label exactly \( q \) vertices in \( C \) with 1. This ensures that all vertices in \( X \) get the required labelSum of three.
As a result, each vertex in \( X \) gets a label of 0, and exactly \( q \) vertices in \( C \) receive a label of 1. The vertices that get a label of 1 from $C$ will correspond to $\overline{C'} \subseteq \overline{C}$ that forms a solution to the X3C instance. \qed
\end{proof}
Hence, from Theorem \ref{etcresult} and Lemma \ref{lemma2}, we arrive at the following theorem.
\begin{theorem}
    R3D on split graphs is NP-complete.
\end{theorem}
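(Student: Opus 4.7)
The plan is to combine the construction and Lemma~\ref{lemma2} already established with a standard membership argument to deduce NP-completeness. First I would verify that R3D lies in NP: a candidate Roman $\{3\}$-dominating function $f$ serves as a polynomial-size witness, and checking, for each vertex $u$, that $f(u) = 0$ implies labelSum at $u$ is at least $3$, that $f(u) = 1$ implies labelSum at $u$ is at least $2$, and that the total weight is at most the target, all take polynomial time. This step is routine and is the shortest part of the proof.

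Next I would argue that the mapping $\langle \overline{X}, \overline{C}\rangle \mapsto (G, 7q)$ described in the construction is a valid polynomial-time many-one reduction from X3C. The vertex set has size $|X| + |A| + |B| + |Y| + |Z| + |C| = 9q + 20q + t = \mathcal{O}(q + t)$, and the edges (the clique on $A \cup B \cup C$, the matchings between $X$ and $A$, $B$, the incidence edges between $X$ and $C$, and the prescribed triples between the six-vertex blocks of $A$ and the twenty-vertex blocks of $Z$, and analogously for $B$ and $Y$) can be enumerated in polynomial time. As already observed right after the construction, $G[A \cup B \cup C]$ is a clique and $X \cup Y \cup Z$ is an independent set by construction, so $G$ is a split graph, which keeps the reduction within the target class.

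Finally I would appeal to Lemma~\ref{lemma2}, which provides correctness: $\langle \overline{X}, \overline{C}\rangle$ is a YES-instance of X3C if and only if $G$ admits a Roman $\{3\}$-dominating function of weight $7q$. Combining this equivalence with Theorem~\ref{etcresult}, which asserts that X3C is NP-complete, transfers hardness to R3D on split graphs and, together with the NP-membership argument above, yields NP-completeness.

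Since the heavy lifting was already done in Lemmas~\ref{lemma1} and~\ref{lemma2}, there is no real obstacle here. The only items worth double-checking are that each vertex of $G$ truly belongs to one side of the claimed split partition (which follows directly from the construction) and that the reduction size is polynomial in $|\overline{X}| + |\overline{C}|$ (immediate from the vertex and edge counts above); both are routine verifications.
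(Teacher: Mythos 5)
Your proposal is correct and follows the same route as the paper: the theorem is obtained directly by combining Lemma~\ref{lemma2} with Theorem~\ref{etcresult}, plus the (routine) observations that membership in NP holds, the construction is polynomial-time, and $G$ is a split graph. The paper leaves these last points implicit, so your write-up is simply a slightly more explicit version of the same argument.
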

\section{W[2]-hard parameterized by weight}
\begin{figure} [!ht]
    \centering
    \begin{tikzpicture} [thick,scale=0.62, every node/.style={scale=1.1}]
        \draw (3, 6) ellipse (1 and 2.1);
        \draw (3, 14) ellipse (1 and 2.1);
        \draw (3, 22) ellipse (1 and 2.1);

        \draw[blue] (3, 23.5) -- (3, 23);
        \draw[thin] (3, 23) -- (3, 22.5);
        \draw[thin] (3, 22.5) -- (3, 22);
        \draw[thin] (3, 22) -- (3, 21.5);
        \draw[thin] (3, 21.5) -- (3, 21);
        \draw[thin] (3, 21) -- (3, 20.5);

        \draw[blue] (3, 23.5) -- (7.5, 25.4);
        \draw[blue] (3, 23.5) -- (7.5, 25.1);
        \draw[thin, gray] (3, 23) -- (7.5, 25.1);
        \draw[thin, gray] (3, 23) -- (7.5, 24.8);
        \draw[thin, gray] (3, 23) -- (7.5, 25.4);
        \draw[thin, gray] (3, 22.5) -- (7.5, 24.8);
        \draw[thin, gray] (3, 22.5) -- (7.5, 24.5);
        \draw[thin, gray] (3, 22.5) -- (7.5, 25.1);
        \draw[thin, gray] (3, 22) -- (7.5, 24.5);
        \draw[thin, gray] (3, 22) -- (7.5, 24.2);
        \draw[thin, gray] (3, 22) -- (7.5, 24.8);
        \draw[thin, gray] (3, 21.5) -- (7.5, 24.2);
        \draw[thin, gray] (3, 21.5) -- (7.5, 23.9);
        \draw[thin, gray] (3, 21.5) -- (7.5, 24.5);
        \draw[thin, gray] (3, 21) -- (7.5, 23.9);
        \draw[thin, gray] (3, 21) -- (7.5, 23.6);
        \draw[thin, gray] (3, 21) -- (7.5, 24.2);
        \draw[thin, gray] (3, 20.5) -- (7.5, 23.6);
        \draw[thin, gray] (3, 20.5) -- (7.5, 23.9);

        \draw[blue] (3, 23.5) -- (7.5, 22.9);
        \draw[blue] (3, 23.5) -- (7.5, 22.6);
        \draw[thin, gray] (3, 23) -- (7.5, 22.6);
        \draw[thin, gray] (3, 23) -- (7.5, 22.3);
        \draw[thin, gray] (3, 23) -- (7.5, 22.9);
        \draw[thin, gray] (3, 22.5) -- (7.5, 22.3);
        \draw[thin, gray] (3, 22.5) -- (7.5, 22);
        \draw[thin, gray] (3, 22.5) -- (7.5, 22.6);
        \draw[thin, gray] (3, 22) -- (7.5, 22);
        \draw[thin, gray] (3, 22) -- (7.5, 21.7);
        \draw[thin, gray] (3, 22) -- (7.5, 22.3);
        \draw[thin, gray] (3, 21.5) -- (7.5, 21.7);
        \draw[thin, gray] (3, 21.5) -- (7.5, 21.4);
        \draw[thin, gray] (3, 21.5) -- (7.5, 22);
        \draw[thin, gray] (3, 21) -- (7.5, 21.4);
        \draw[thin, gray] (3, 21) -- (7.5, 21.1);
        \draw[thin, gray] (3, 21) -- (7.5, 21.7);
        \draw[thin, gray] (3, 20.5) -- (7.5, 21.1);
        \draw[thin, gray] (3, 20.5) -- (7.5, 21.4);

        \draw[blue] (3, 23.5) -- (7.5, 20.4);
        \draw[blue] (3, 23.5) -- (7.5, 20.1);
        \draw[thin, gray] (3, 23) -- (7.5, 20.1);
        \draw[thin, gray] (3, 23) -- (7.5, 20.4);
        \draw[thin, gray] (3, 23) -- (7.5, 19.8);
        \draw[thin, gray] (3, 22.5) -- (7.5, 19.8);
        \draw[thin, gray] (3, 22.5) -- (7.5, 20.1);
        \draw[thin, gray] (3, 22.5) -- (7.5, 19.5);
        \draw[thin, gray] (3, 22) -- (7.5, 19.5);
        \draw[thin, gray] (3, 22) -- (7.5, 19.2);
        \draw[thin, gray] (3, 22) -- (7.5, 19.8);
        \draw[thin, gray] (3, 21.5) -- (7.5, 19.2);
        \draw[thin, gray] (3, 21.5) -- (7.5, 18.9);
        \draw[thin, gray] (3, 21.5) -- (7.5, 19.5);
        \draw[thin, gray] (3, 21) -- (7.5, 18.9);
        \draw[thin, gray] (3, 21) -- (7.5, 18.6);
        \draw[thin, gray] (3, 21) -- (7.5, 19.2);
        \draw[thin, gray] (3, 20.5) -- (7.5, 18.6);
        \draw[thin, gray] (3, 20.5) -- (7.5, 18.9);

        \draw[red] (3, 15.5) -- (7.5, 17.4);
        \draw[red] (3, 15.5) -- (7.5, 17.1);
        \draw[thin, gray] (3, 15) -- (7.5, 17.1);
        \draw[thin, gray] (3, 15) -- (7.5, 17.4);
        \draw[thin, gray] (3, 15) -- (7.5, 16.8);
        \draw[thin, gray] (3, 14.5) -- (7.5, 16.8);
        \draw[thin, gray] (3, 14.5) -- (7.5, 17.1);
        \draw[thin, gray] (3, 14.5) -- (7.5, 16.5);
        \draw[thin, gray] (3, 14) -- (7.5, 16.5);
        \draw[thin, gray] (3, 14) -- (7.5, 16.8);
        \draw[thin, gray] (3, 14) -- (7.5, 16.2);
        \draw[thin, gray] (3, 13.5) -- (7.5, 16.5);
        \draw[thin, gray] (3, 13.5) -- (7.5, 16.2);
        \draw[thin, gray] (3, 13.5) -- (7.5, 15.9);
        \draw[thin, gray] (3, 13) -- (7.5, 16.2);
        \draw[thin, gray] (3, 13) -- (7.5, 15.9);
        \draw[thin, gray] (3, 13) -- (7.5, 15.6);
        \draw[thin, gray] (3, 12.5) -- (7.5, 15.6);
        \draw[thin, gray] (3, 12.5) -- (7.5, 15.9);

        \draw[red] (3, 15.5) -- (7.5, 14.9);
        \draw[red] (3, 15.5) -- (7.5, 14.6);
        \draw[thin, gray] (3, 15) -- (7.5, 14.6);
        \draw[thin, gray] (3, 15) -- (7.5, 14.9);
        \draw[thin, gray] (3, 15) -- (7.5, 14.3);
        \draw[thin, gray] (3, 14.5) -- (7.5, 14.3);
        \draw[thin, gray] (3, 14.5) -- (7.5, 14.6);
        \draw[thin, gray] (3, 14.5) -- (7.5, 14);
        \draw[thin, gray] (3, 14) -- (7.5, 14.3);
        \draw[thin, gray] (3, 14) -- (7.5, 14);
        \draw[thin, gray] (3, 14) -- (7.5, 13.7);
        \draw[thin, gray] (3, 13.5) -- (7.5, 14);
        \draw[thin, gray] (3, 13.5) -- (7.5, 13.7);
        \draw[thin, gray] (3, 13.5) -- (7.5, 13.4);
        \draw[thin, gray] (3, 13) -- (7.5, 13.7);
        \draw[thin, gray] (3, 13) -- (7.5, 13.4);
        \draw[thin, gray] (3, 13) -- (7.5, 13.1);
        \draw[thin, gray] (3, 12.5) -- (7.5, 13.4);
        \draw[thin, gray] (3, 12.5) -- (7.5, 13.1);

        \draw[red] (3, 15.5) -- (7.5, 12.4);
        \draw[red] (3, 15.5) -- (7.5, 12.1);
        \draw[thin, gray] (3, 15) -- (7.5, 12.4);
        \draw[thin, gray] (3, 15) -- (7.5, 12.1);
        \draw[thin, gray] (3, 15) -- (7.5, 11.8);
        \draw[thin, gray] (3, 14.5) -- (7.5, 12.1);
        \draw[thin, gray] (3, 14.5) -- (7.5, 11.8);
        \draw[thin, gray] (3, 14.5) -- (7.5, 11.5);
        \draw[thin, gray] (3, 14) -- (7.5, 11.8);
        \draw[thin, gray] (3, 14) -- (7.5, 11.5);
        \draw[thin, gray] (3, 14) -- (7.5, 11.2);
        \draw[thin, gray] (3, 13.5) -- (7.5, 11.5);
        \draw[thin, gray] (3, 13.5) -- (7.5, 11.2);
        \draw[thin, gray] (3, 13.5) -- (7.5, 10.9);
        \draw[thin, gray] (3, 13) -- (7.5, 11.2);
        \draw[thin, gray] (3, 13) -- (7.5, 10.9);
        \draw[thin, gray] (3, 13) -- (7.5, 10.6);
        \draw[thin, gray] (3, 12.5) -- (7.5, 10.6);
        \draw[thin, gray] (3, 12.5) -- (7.5, 10.9);

        \draw[dashed] (3, 23.5) to [out=210,in=165] (3, 15.5);
        \draw[blue] (3, 23.5) to [out=180,in=180] (3, 15);
        \draw[red] (3, 23) to [out=210,in=150] (3,15.5);

        \draw[dashed] (3, 15.5) to [out=210,in=165] (3, 7.5);
        \draw[red] (3, 15.5) to [out=180,in=180] (3, 7);
        \draw[mynicegreen] (3, 15) to [out=210,in=150] (3, 7.5);

        \draw[dashed] (3, 23.5) to [out=195,in=165] (3, 7.5);
        \draw[blue] (3, 23.5) to [out=180,in=180] (3, 7);
        \draw[mynicegreen] (3, 23) to [out=210,in=150] (3, 7.5);

        \draw[mynicegreen] (3, 7.5) -- (7.5, 9.4);
        \draw[mynicegreen] (3, 7.5) -- (7.5, 9.1);
        \draw[thin, gray] (3, 7) -- (7.5, 9.4);
        \draw[thin, gray] (3, 7) -- (7.5, 9.1);
        \draw[thin, gray] (3, 7) -- (7.5, 8.8);
        \draw[thin, gray] (3, 6.5) -- (7.5, 9.1);
        \draw[thin, gray] (3, 6.5) -- (7.5, 8.8);
        \draw[thin, gray] (3, 6.5) -- (7.5, 8.5);
        \draw[thin, gray] (3, 6) -- (7.5, 8.8);
        \draw[thin, gray] (3, 6) -- (7.5, 8.5);
        \draw[thin, gray] (3, 6) -- (7.5, 8.2);
        \draw[thin, gray] (3, 5.5) -- (7.5, 8.5);
        \draw[thin, gray] (3, 5.5) -- (7.5, 8.2);
        \draw[thin, gray] (3, 5.5) -- (7.5, 7.9);
        \draw[thin, gray] (3, 5) -- (7.5, 8.2);
        \draw[thin, gray] (3, 5) -- (7.5, 7.9);
        \draw[thin, gray] (3, 5) -- (7.5, 7.6);
        \draw[thin, gray] (3, 4.5) -- (7.5, 7.9);
        \draw[thin, gray] (3, 4.5) -- (7.5, 7.6);

        \draw[mynicegreen] (3, 7.5) -- (7.5, 6.9);
        \draw[mynicegreen] (3, 7.5) -- (7.5, 6.6);
        \draw[thin, gray] (3, 7) -- (7.5, 6.9);
        \draw[thin, gray] (3, 7) -- (7.5, 6.6);
        \draw[thin, gray] (3, 7) -- (7.5, 6.3);
        \draw[thin, gray] (3, 6.5) -- (7.5, 6.6);
        \draw[thin, gray] (3, 6.5) -- (7.5, 6.3);
        \draw[thin, gray] (3, 6.5) -- (7.5, 6);
        \draw[thin, gray] (3, 6) -- (7.5, 6.3);
        \draw[thin, gray] (3, 6) -- (7.5, 6);
        \draw[thin, gray] (3, 6) -- (7.5, 5.7);
        \draw[thin, gray] (3, 5.5) -- (7.5, 6);
        \draw[thin, gray] (3, 5.5) -- (7.5, 5.7);
        \draw[thin, gray] (3, 5.5) -- (7.5, 5.4);
        \draw[thin, gray] (3, 5) -- (7.5, 5.7);
        \draw[thin, gray] (3, 5) -- (7.5, 5.4);
        \draw[thin, gray] (3, 5) -- (7.5, 5.1);
        \draw[thin, gray] (3, 4.5) -- (7.5, 5.4);
        \draw[thin, gray] (3, 4.5) -- (7.5, 5.1);

        \draw[mynicegreen] (3, 7.5) -- (7.5, 4.4);
        \draw[mynicegreen] (3, 7.5) -- (7.5, 4.1);
        \draw[thin, gray] (3, 7) -- (7.5, 4.4);
        \draw[thin, gray] (3, 7) -- (7.5, 4.1);
        \draw[thin, gray] (3, 7) -- (7.5, 3.8);
        \draw[thin, gray] (3, 6.5) -- (7.5, 4.1);
        \draw[thin, gray] (3, 6.5) -- (7.5, 3.8);
        \draw[thin, gray] (3, 6.5) -- (7.5, 3.5);
        \draw[thin, gray] (3, 6) -- (7.5, 3.8);
        \draw[thin, gray] (3, 6) -- (7.5, 3.5);
        \draw[thin, gray] (3, 6) -- (7.5, 3.2);
        \draw[thin, gray] (3, 5.5) -- (7.5, 3.5);
        \draw[thin, gray] (3, 5.5) -- (7.5, 3.2);
        \draw[thin, gray] (3, 5.5) -- (7.5, 2.9);
        \draw[thin, gray] (3, 5) -- (7.5, 3.2);
        \draw[thin, gray] (3, 5) -- (7.5, 2.9);
        \draw[thin, gray] (3, 5) -- (7.5, 2.6);
        \draw[thin, gray] (3, 4.5) -- (7.5, 2.9);
        \draw[thin, gray] (3, 4.5) -- (7.5, 2.6);

        \draw[mynicegreen] (3, 7.5) -- (3, 7);
        \draw[thin] (3, 7) -- (3, 6.5);
        \draw[thin] (3, 6.5) -- (3, 6);
        \draw[thin] (3, 6) -- (3, 5.5);
        \draw[thin] (3, 5.5) -- (3, 5);
        \draw[thin] (3, 5) -- (3, 4.5);

        \draw[red] (3, 15.5) -- (3, 15);
        \draw[thin] (3, 15) -- (3, 14.5);
        \draw[thin] (3, 14.5) -- (3, 14);
        \draw[thin] (3, 14) -- (3, 13.5);
        \draw[thin] (3, 13.5) -- (3, 13);
        \draw[thin] (3, 13) -- (3, 12.5);
        
        \filldraw (3, 14) circle (0.05);
        \filldraw (3, 14.5) circle (0.05);
        \filldraw (3, 15.5) circle (0.05); 
        \filldraw (3, 15) circle (0.05);
        \filldraw (3, 13.5) circle (0.05);
        \filldraw (3, 12.5) circle (0.05); 
        \filldraw (3, 13) circle (0.05); 

        \filldraw (3, 22) circle (0.05);
        \filldraw (3, 22.5) circle (0.05);
        \filldraw (3, 23.5) circle (0.05); 
        \filldraw (3, 23) circle (0.05);
        \filldraw (3, 21.5) circle (0.05);
        \filldraw (3, 20.5) circle (0.05); 
        \filldraw (3, 21) circle (0.05); 

        \filldraw (3, 6) circle (0.05);
        \filldraw (3, 6.5) circle (0.05);
        \filldraw (3, 7.5) circle (0.05); 
        \filldraw (3, 7) circle (0.05);
        \filldraw (3, 5.5) circle (0.05);
        \filldraw (3, 4.5) circle (0.05); 
        \filldraw (3, 5) circle (0.05); 

        \draw (7.5, 6) ellipse (0.5 and 1.1);
        
        \draw (7.5, 3.5) ellipse (0.5 and 1.1);
        
        \draw (7.5, 8.5) ellipse (0.5 and 1.1);

        \filldraw (7.5, 6) circle (0.05);
        \filldraw (7.5, 6.3) circle (0.05);
        \filldraw (7.5, 6.6) circle (0.05); 
        \filldraw (7.5, 5.7) circle (0.05);
        \filldraw (7.5, 5.4) circle (0.05);
        \filldraw (7.5, 5.1) circle (0.05); 
        \filldraw (7.5, 6.9) circle (0.05); 

        \draw[thin] (7.5, 6) -- (7.5, 6.3);
        \draw[thin] (7.5, 6.3) -- (7.5, 6.6);
        \draw[thin] (7.5, 6.6) -- (7.5, 6.9);
        \draw[thin] (7.5, 5.7) -- (7.5, 6);
        \draw[thin] (7.5, 5.4) -- (7.5, 5.7);
        \draw[thin] (7.5, 5.1) -- (7.5, 5.4);

        \filldraw (7.5, 3.5) circle (0.05);
        \filldraw (7.5, 3.8) circle (0.05);
        \filldraw (7.5, 4.1) circle (0.05); 
        \filldraw (7.5, 4.4) circle (0.05);
        \filldraw (7.5, 3.2) circle (0.05);
        \filldraw (7.5, 2.6) circle (0.05); 
        \filldraw (7.5, 2.9) circle (0.05); 

        \draw[thin] (7.5, 3.5) -- (7.5, 3.8);
        \draw[thin] (7.5, 3.8) -- (7.5, 4.1);
        \draw[thin] (7.5, 4.1) -- (7.5, 4.4);
        \draw[thin] (7.5, 3.2) -- (7.5, 3.5);
        \draw[thin] (7.5, 2.9) -- (7.5, 3.2);
        \draw[thin] (7.5, 2.6) -- (7.5, 2.9);

        \filldraw (7.5, 8.5) circle (0.05);
        \filldraw (7.5, 8.2) circle (0.05);
        \filldraw (7.5, 7.9) circle (0.05); 
        \filldraw (7.5, 7.6) circle (0.05);
        \filldraw (7.5, 8.8) circle (0.05);
        \filldraw (7.5, 9.1) circle (0.05); 
        \filldraw (7.5, 9.4) circle (0.05); 

        \draw[thin] (7.5, 8.5) -- (7.5, 8.8);
        \draw[thin] (7.5, 8.2) -- (7.5, 8.5);
        \draw[thin] (7.5, 7.9) -- (7.5, 8.2);
        \draw[thin] (7.5, 8.8) -- (7.5, 9.1);
        \draw[thin] (7.5, 9.1) -- (7.5, 9.4);
        \draw[thin] (7.5, 7.6) -- (7.5, 7.9);

        \draw (7.5, 14) ellipse (0.5 and 1.1);
        \draw (7.5, 16.5) ellipse (0.5 and 1.1);
        \draw (7.5, 11.5) ellipse (0.5 and 1.1);

        \filldraw (7.5, 14) circle (0.05);
        \filldraw (7.5, 14.3) circle (0.05);
        \filldraw (7.5, 14.6) circle (0.05); 
        \filldraw (7.5, 13.7) circle (0.05);
        \filldraw (7.5, 13.4) circle (0.05);
        \filldraw (7.5, 13.1) circle (0.05); 
        \filldraw (7.5, 14.9) circle (0.05); 

        \draw[thin] (7.5, 14) -- (7.5, 14.3);
        \draw[thin] (7.5, 14.3) -- (7.5, 14.6);
        \draw[thin] (7.5, 14.6) -- (7.5, 14.9);
        \draw[thin] (7.5, 13.7) -- (7.5, 14);
        \draw[thin] (7.5, 13.4) -- (7.5, 13.7);
        \draw[thin] (7.5, 13.1) -- (7.5, 13.4);

        \filldraw (7.5, 11.5) circle (0.05);
        \filldraw (7.5, 11.8) circle (0.05);
        \filldraw (7.5, 12.1) circle (0.05); 
        \filldraw (7.5, 12.4) circle (0.05);
        \filldraw (7.5, 11.2) circle (0.05);
        \filldraw (7.5, 10.6) circle (0.05); 
        \filldraw (7.5, 10.9) circle (0.05); 

        \draw[thin] (7.5, 11.5) -- (7.5, 11.8);
        \draw[thin] (7.5, 11.8) -- (7.5, 12.1);
        \draw[thin] (7.5, 12.1) -- (7.5, 12.4);
        \draw[thin] (7.5, 11.2) -- (7.5, 11.5);
        \draw[thin] (7.5, 10.9) -- (7.5, 11.2);
        \draw[thin] (7.5, 10.6) -- (7.5, 10.9);

        \filldraw (7.5, 16.5) circle (0.05);
        \filldraw (7.5, 16.2) circle (0.05);
        \filldraw (7.5, 15.9) circle (0.05); 
        \filldraw (7.5, 15.6) circle (0.05);
        \filldraw (7.5, 16.8) circle (0.05);
        \filldraw (7.5, 17.1) circle (0.05); 
        \filldraw (7.5, 17.4) circle (0.05); 

        \draw[thin] (7.5, 16.5) -- (7.5, 16.8);
        \draw[thin] (7.5, 16.2) -- (7.5, 16.5);
        \draw[thin] (7.5, 15.9) -- (7.5, 16.2);
        \draw[thin] (7.5, 16.8) -- (7.5, 17.1);
        \draw[thin] (7.5, 17.1) -- (7.5, 17.4);
        \draw[thin] (7.5, 15.6) -- (7.5, 15.9);

        \draw (7.5, 22) ellipse (0.5 and 1.1);
        \draw (7.5, 19.5) ellipse (0.5 and 1.1);
        \draw (7.5, 24.5) ellipse (0.5 and 1.1);

        \filldraw (7.5, 22) circle (0.05);
        \filldraw (7.5, 22.3) circle (0.05);
        \filldraw (7.5, 22.6) circle (0.05); 
        \filldraw (7.5, 21.7) circle (0.05);
        \filldraw (7.5, 21.4) circle (0.05);
        \filldraw (7.5, 21.1) circle (0.05); 
        \filldraw (7.5, 22.9) circle (0.05); 

        \draw[thin] (7.5, 22) -- (7.5, 22.3);
        \draw[thin] (7.5, 22.3) -- (7.5, 22.6);
        \draw[thin] (7.5, 22.6) -- (7.5, 22.9);
        \draw[thin] (7.5, 21.7) -- (7.5, 22);
        \draw[thin] (7.5, 21.4) -- (7.5, 21.7);
        \draw[thin] (7.5, 21.1) -- (7.5, 21.4);

        \filldraw (7.5, 19.5) circle (0.05);
        \filldraw (7.5, 19.8) circle (0.05);
        \filldraw (7.5, 20.1) circle (0.05); 
        \filldraw (7.5, 20.4) circle (0.05);
        \filldraw (7.5, 19.2) circle (0.05);
        \filldraw (7.5, 18.6) circle (0.05); 
        \filldraw (7.5, 18.9) circle (0.05); 

        \draw[thin] (7.5, 19.5) -- (7.5, 19.8);
        \draw[thin] (7.5, 19.8) -- (7.5, 20.1);
        \draw[thin] (7.5, 20.1) -- (7.5, 20.4);
        \draw[thin] (7.5, 19.2) -- (7.5, 19.5);
        \draw[thin] (7.5, 18.9) -- (7.5, 19.2);
        \draw[thin] (7.5, 18.6) -- (7.5, 18.9);

        \filldraw (7.5, 24.5) circle (0.05);
        \filldraw (7.5, 24.2) circle (0.05);
        \filldraw (7.5, 23.9) circle (0.05); 
        \filldraw (7.5, 23.6) circle (0.05);
        \filldraw (7.5, 24.8) circle (0.05);
        \filldraw (7.5, 25.1) circle (0.05); 
        \filldraw (7.5, 25.4) circle (0.05); 

        \draw[thin] (7.5, 24.5) -- (7.5, 24.8);
        \draw[thin] (7.5, 24.2) -- (7.5, 24.5);
        \draw[thin] (7.5, 23.9) -- (7.5, 24.2);
        \draw[thin] (7.5, 24.8) -- (7.5, 25.1);
        \draw[thin] (7.5, 25.1) -- (7.5, 25.4);
        \draw[thin] (7.5, 23.6) -- (7.5, 23.9);

        \filldraw (10, 6) circle (0.05);
        \filldraw (10, 3.5) circle (0.05);
        \filldraw (10, 8.5) circle (0.05);     

        \filldraw (10, 14) circle (0.05);
        \filldraw (10, 16.5) circle (0.05);
        \filldraw (10, 11.5) circle (0.05); 

        \filldraw (10, 22) circle (0.05);
        \filldraw (10, 24.5) circle (0.05);
        \filldraw (10, 19.5) circle (0.05); 

        \filldraw (12, 6) circle (0.05);
        \filldraw (12, 3.5) circle (0.05);
        \filldraw (12, 8.5) circle (0.05);     

        \filldraw (12, 14) circle (0.05);
        \filldraw (12, 16.5) circle (0.05);
        \filldraw (12, 11.5) circle (0.05); 

        \filldraw (12, 22) circle (0.05);
        \filldraw (12, 24.5) circle (0.05);
        \filldraw (12, 19.5) circle (0.05);

        \filldraw (14, 6) circle (0.05);
        \filldraw (14, 3.5) circle (0.05);
        \filldraw (14, 8.5) circle (0.05);
        
        \filldraw (14, 14) circle (0.05);
        \filldraw (14, 11.5) circle (0.05);
        \filldraw (14, 16.5) circle (0.05);
        
        \filldraw (14, 22) circle (0.05);
        \filldraw (14, 19.5) circle (0.05);
        \filldraw (14, 24.5) circle (0.05);

        \draw[thin] (7.5, 22) -- (10, 22);
        \draw[thin] (7.5, 22.3) -- (10, 22);
        \draw[thin] (7.5, 22.6) -- (10, 22);
        \draw[thin] (7.5, 21.7) -- (10, 22);
        \draw[thin] (7.5, 21.4) -- (10, 22);
        \draw[thin] (7.5, 21.1) -- (10, 22);
        \draw[thin] (7.5, 22.9) -- (10, 22);

        \draw[thin] (7.5, 19.5) -- (10, 19.5);
        \draw[thin] (7.5, 19.8) -- (10, 19.5);
        \draw[thin] (7.5, 20.1) -- (10, 19.5);
        \draw[thin] (7.5, 19.2) -- (10, 19.5);
        \draw[thin] (7.5, 18.9) -- (10, 19.5);
        \draw[thin] (7.5, 18.6) -- (10, 19.5);
        \draw[thin] (7.5, 20.4) -- (10, 19.5);
        
        \draw[thin] (7.5, 24.5) -- (10, 24.5);
        \draw[thin] (7.5, 24.2) -- (10, 24.5);
        \draw[thin] (7.5, 23.9) -- (10, 24.5);
        \draw[thin] (7.5, 24.8) -- (10, 24.5);
        \draw[thin] (7.5, 25.1) -- (10, 24.5);
        \draw[thin] (7.5, 23.6) -- (10, 24.5);
        \draw[thin] (7.5, 25.4) -- (10, 24.5);

        \draw[thin] (7.5, 14) -- (10, 14);
        \draw[thin] (7.5, 14.3) -- (10, 14);
        \draw[thin] (7.5, 14.6) -- (10, 14);
        \draw[thin] (7.5, 13.7) -- (10, 14);
        \draw[thin] (7.5, 13.4) -- (10, 14);
        \draw[thin] (7.5, 13.1) -- (10, 14);
        \draw[thin] (7.5, 14.9) -- (10, 14);

        \draw[thin] (7.5, 11.5) -- (10, 11.5);
        \draw[thin] (7.5, 11.8) -- (10, 11.5);
        \draw[thin] (7.5, 12.1) -- (10, 11.5);
        \draw[thin] (7.5, 11.2) -- (10, 11.5);
        \draw[thin] (7.5, 10.9) -- (10, 11.5);
        \draw[thin] (7.5, 10.6) -- (10, 11.5);
        \draw[thin] (7.5, 12.4) -- (10, 11.5);
        
        \draw[thin] (7.5, 16.5) -- (10, 16.5);
        \draw[thin] (7.5, 16.2) -- (10, 16.5);
        \draw[thin] (7.5, 15.9) -- (10, 16.5);
        \draw[thin] (7.5, 16.8) -- (10, 16.5);
        \draw[thin] (7.5, 17.1) -- (10, 16.5);
        \draw[thin] (7.5, 15.6) -- (10, 16.5);
        \draw[thin] (7.5, 17.4) -- (10, 16.5);

        \draw[thin] (7.5, 6) -- (10, 6);
        \draw[thin] (7.5, 6.3) -- (10, 6);
        \draw[thin] (7.5, 6.6) -- (10, 6);
        \draw[thin] (7.5, 5.7) -- (10, 6);
        \draw[thin] (7.5, 5.4) -- (10, 6);
        \draw[thin] (7.5, 5.1) -- (10, 6);
        \draw[thin] (7.5, 6.9) -- (10, 6);

        \draw[thin] (7.5, 3.5) -- (10, 3.5);
        \draw[thin] (7.5, 3.8) -- (10, 3.5);
        \draw[thin] (7.5, 4.1) -- (10, 3.5);
        \draw[thin] (7.5, 3.2) -- (10, 3.5);
        \draw[thin] (7.5, 2.9) -- (10, 3.5);
        \draw[thin] (7.5, 2.6) -- (10, 3.5);
        \draw[thin] (7.5, 4.4) -- (10, 3.5);
        
        \draw[thin] (7.5, 8.5) -- (10, 8.5);
        \draw[thin] (7.5, 8.2) -- (10, 8.5);
        \draw[thin] (7.5, 7.9) -- (10, 8.5);
        \draw[thin] (7.5, 8.8) -- (10, 8.5);
        \draw[thin] (7.5, 9.1) -- (10, 8.5);
        \draw[thin] (7.5, 7.6) -- (10, 8.5);
        \draw[thin] (7.5, 9.4) -- (10, 8.5);

        \draw[thin] (10, 16.5) -- (12, 16.5);
        \draw[thin] (10, 14) -- (12, 14);
        \draw[thin] (10,11.5) -- (12, 11.5);
        
        \draw[thin] (10, 8.5) -- (12, 8.5);
        \draw[thin] (10, 6) -- (12, 6);
        \draw[thin] (10, 3.5) -- (12, 3.5);

        \draw[thin] (10, 24.5) -- (12, 24.5);
        \draw[thin] (10, 22) -- (12, 22);
        \draw[thin] (10, 19.5) -- (12, 19.5);
        
        \draw[thin] (14, 14) -- (12, 16.5);
        \draw[thin] (14, 14) -- (12, 14);
        \draw[thin] (14, 14) -- (12, 11.5);

        \draw[thin] (14, 11.5) -- (12, 16.5);
        \draw[thin] (14, 11.5) -- (12, 14);
        \draw[thin] (14, 11.5) -- (12, 11.5);

        \draw[thin] (14, 16.5) -- (12, 16.5);
        \draw[thin] (14, 16.5) -- (12, 14);
        \draw[thin] (14, 16.5) -- (12, 11.5);
        
        \draw[thin] (14, 6) -- (12, 8.5);
        \draw[thin] (14, 6) -- (12, 6);
        \draw[thin] (14, 6) -- (12, 3.5);

        \draw[thin] (14, 3.5) -- (12, 8.5);
        \draw[thin] (14, 3.5) -- (12, 6);
        \draw[thin] (14, 3.5) -- (12, 3.5);

        \draw[thin] (14, 8.5) -- (12, 8.5);
        \draw[thin] (14, 8.5) -- (12, 6);
        \draw[thin] (14, 8.5) -- (12, 3.5);

        \draw[thin] (14, 22) -- (12, 24.5);
        \draw[thin] (14, 22) -- (12, 22);
        \draw[thin] (14, 22) -- (12, 19.5);

        \draw[thin] (14, 19.5) -- (12, 24.5);
        \draw[thin] (14, 19.5) -- (12, 22);
        \draw[thin] (14, 19.5) -- (12, 19.5);

        \draw[thin] (14, 24.5) -- (12, 24.5);
        \draw[thin] (14, 24.5) -- (12, 22);
        \draw[thin] (14, 24.5) -- (12, 19.5);

        \draw (-3.5, 3) circle (0cm) node[anchor=south]{$G$};
        \draw (7.5, 0.5) circle (0cm) node[anchor=south]{$G'$};

        \draw (3, 24.25) circle (0cm) node[anchor=south]{$A_1$};
        \draw (3, 16.25) circle (0cm) node[anchor=south]{$A_2$};
        \draw (3, 8.25) circle (0cm) node[anchor=south]{$A_3$};

        \draw (3, 23.45) circle (0cm) node[anchor=south]{$u_1$};
        \draw (3, 15.45) circle (0cm) node[anchor=south]{$u_2$};
        \draw (3, 7.45) circle (0cm) node[anchor=south]{$u_3$};

        \draw (8, 25.5) circle (0cm) node[anchor=south]{$B_1^1$};
        \draw (8, 22.8) circle (0cm) node[anchor=south]{$B_1^2$};
        \draw (8, 20.3) circle (0cm) node[anchor=south]{$B_1^3$};

        \draw (8, 17.5) circle (0cm) node[anchor=south]{$B_2^1$};
        \draw (8, 14.8) circle (0cm) node[anchor=south]{$B_2^2$};
        \draw (8, 12.3) circle (0cm) node[anchor=south]{$B_2^3$};

        \draw (8, 9.5) circle (0cm) node[anchor=south]{$B_3^1$};
        \draw (8, 6.8) circle (0cm) node[anchor=south]{$B_3^2$};
        \draw (8, 4.3) circle (0cm) node[anchor=south]{$B_3^3$};
        
        \draw (10, 24.5) circle (0cm) node[anchor=south]{$x_1^1$};
        \draw (10, 22) circle (0cm) node[anchor=south]{$x_1^2$};
        \draw (10, 19.5) circle (0cm) node[anchor=south]{$x_1^3$};
        
        \draw (10, 16.5) circle (0cm) node[anchor=south]{$x_2^1$};
        \draw (10, 14) circle (0cm) node[anchor=south]{$x_2^2$};
        \draw (10, 11.5) circle (0cm) node[anchor=south]{$x_2^3$};
        
        \draw (10, 8.5) circle (0cm) node[anchor=south]{$x_3^1$};
        \draw (10, 6) circle (0cm) node[anchor=south]{$x_3^2$};
        \draw (10, 3.5) circle (0cm) node[anchor=south]{$x_3^3$};
        
        \draw (11.75, 24.5) circle (0cm) node[anchor=south]{$y_1^1$};
        \draw (11.75, 22) circle (0cm) node[anchor=south]{$y_1^2$};
        \draw (11.75, 19.5) circle (0cm) node[anchor=south]{$y_1^3$};
        
        \draw (11.75, 16.5) circle (0cm) node[anchor=south]{$y_2^1$};
        \draw (11.75, 14) circle (0cm) node[anchor=south]{$y_2^2$};
        \draw (11.75, 11.5) circle (0cm) node[anchor=south]{$y_2^3$};
        
        \draw (11.75, 8.5) circle (0cm) node[anchor=south]{$y_3^1$};
        \draw (11.75, 6) circle (0cm) node[anchor=south]{$y_3^2$};
        \draw (11.75, 3.5) circle (0cm) node[anchor=south]{$y_3^3$};
        
        \draw (14.5, 22) circle (0cm) node[anchor=south]{$z_1^2$};
        \draw (14.5, 14) circle (0cm) node[anchor=south]{$z_2^2$};
        \draw (14.5, 6) circle (0cm) node[anchor=south]{$z_3^2$};

        \draw (14.5, 24.5) circle (0cm) node[anchor=south]{$z_1^1$};
        \draw (14.5, 16.5) circle (0cm) node[anchor=south]{$z_2^1$};
        \draw (14.5, 8.5) circle (0cm) node[anchor=south]{$z_3^1$};

        \draw (14.5, 19.5) circle (0cm) node[anchor=south]{$z_1^3$};
        \draw (14.5, 11.5) circle (0cm) node[anchor=south]{$z_2^3$};
        \draw (14.5, 3.5) circle (0cm) node[anchor=south]{$z_3^3$};

        \filldraw (-3.5, 6) circle (0.05);
        \filldraw (-3.5, 6.5) circle (0.05);
        \filldraw (-3.5, 7.5) circle (0.05); 
        \filldraw (-3.5, 7) circle (0.05);
        \filldraw (-3.5, 5.5) circle (0.05);
        \filldraw (-3.5, 4.5) circle (0.05); 
        \filldraw (-3.5, 5) circle (0.05); 

        \draw[thin] (-3.5, 7.5) -- (-3.5, 7);
        \draw[thin] (-3.5, 7) -- (-3.5, 6.5);
        \draw[thin] (-3.5, 6.5) -- (-3.5, 6);
        \draw[thin] (-3.5, 6) -- (-3.5, 5.5);
        \draw[thin] (-3.5, 5.5) -- (-3.5, 5);
        \draw[thin] (-3.5, 5) -- (-3.5, 4.5);
        
        \draw (-3.5, 7.6) circle (0cm) node[anchor=south]{$u$};
        
    \end{tikzpicture}
    \caption{Graph $G'$ constructed from $G$ and $k=3$. In the induced subgraph $G[A_1\cup A_2\cup A_3]$, only the edges incident on $u_1$, $u_2$ and $u_3$ are shown in the figure. Dashed edges indicate the edges between the vertices of the set $\{u_1, u_2, u_3\}$. The edges incident on $u_1$, $u_2$ and $u_3$ (except dashed) are colored in blue, red and green, respectively. }
    \label{fig:fig3}
\end{figure}
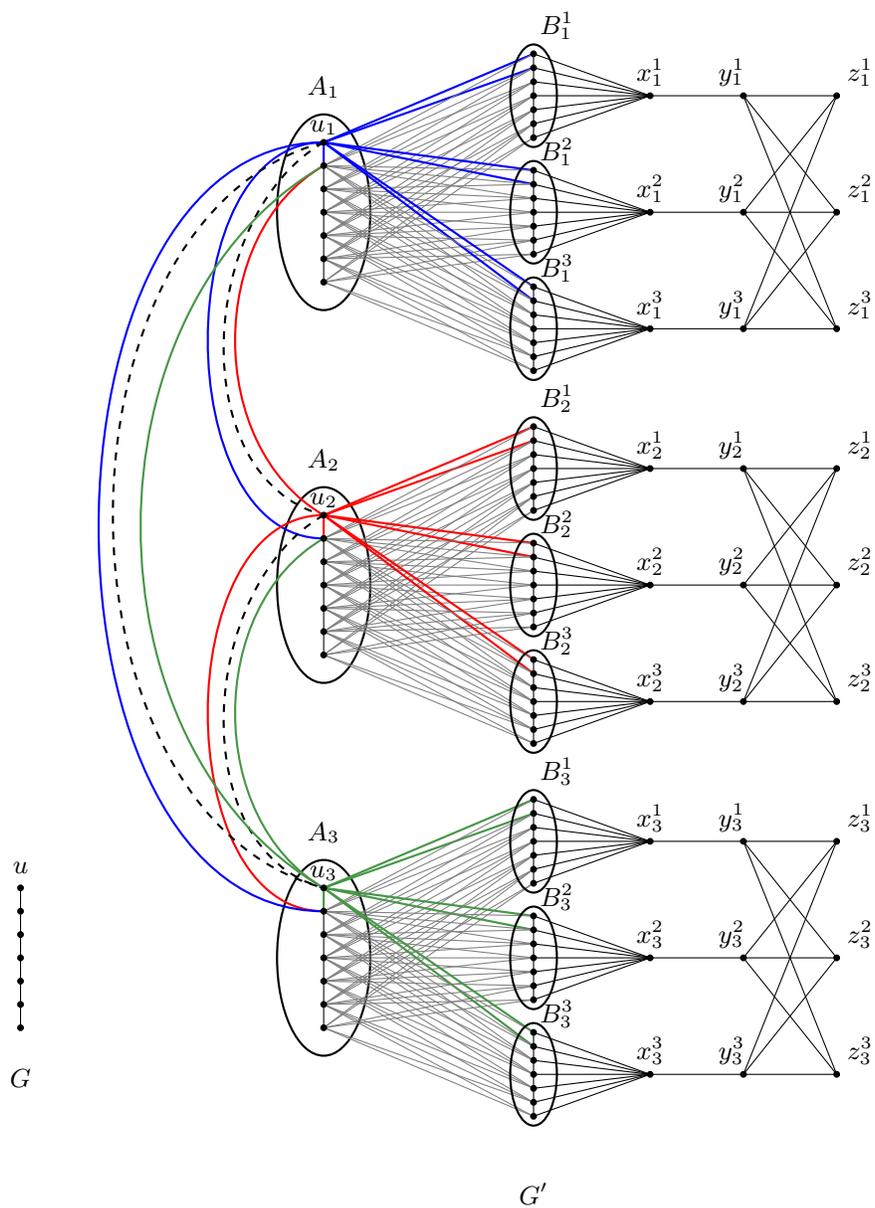

It was asked in~\cite{CHAUDHARY2024301} regarding the parameterized complexity of R3D. In this section, we investigate the parameterized complexity of R3D and prove that the problem is W[2]-hard parameterized by weight. We achieve this result by providing a parameterized reduction from \ds{} which is known to be W[2]-hard parameterized by solution size~\cite[Theorem 13.21]{downey1999parameterized}. Given a graph $G = (V, E)$, a set $S \subseteq V$ is a dominating set if every vertex $u \in V\setminus S$ has a neighbour in $S$. The decision version of \ds{} (DS) asks whether there exists a dominating set of size at most $k$.

Given an instance \(I = (G, k)\) of DS, we construct an instance \(I' = (G', 12k)\) of R3D as follows. Without loss of generality, we assume that $k$ in DS instance $I$ is a multiple of 3 (as this can be achieved by adding the isolated vertices to $G$).
\begin{itemize}
    \item We begin by creating three disjoint copies of \(G\), denoted by \(A_1\), \(A_2\) and \(A_3\), and include them in \(G'\). Vertex $u \in V(G)$ is denoted by $u_1$ in $A_1$, $u_2$ in $A_2$ and $u_3$ in $A_3$.
    \item Additionally, for every \(i \in \{1, 2, 3\}\) and \(j \in \{1, 2, \dots, k\}\), we create a copy of the vertex set \(V(G)\), denoted \(B_i^j\). Thus, there are \(3k\) such sets in total. Vertex $v \in V(G)$ is denoted by $v_i^j$ in $B_i^j$, for each \(i \in \{1, 2, 3\}\) and \(j \in \{1, 2, \dots, k\}\). 
    \item For each set \(B_i^j\), we introduce three new vertices \(x_i^j\), \(y_i^j\) and \(z_i^j\) in \(G'\).
    \item For every \(i \in \{1, 2, 3\}\) and \(j \in \{1, 2, \dots, k\}\), we make \(x_i^j\) adjacent to \(y_i^j\) and also we make \(x_i^j\) adjacent to each vertex of \(B_i^j\). 
    \item We make \(y_i^{3j+1}\), \(y_i^{3j+2}\) and \(y_i^{3j+3}\) adjacent to each vertex among \(z_i^{3j+1}\), \(z_i^{3j+2}\) and \(z_i^{3j+3}\), for every \(i \in \{1, 2, 3\}\) and $j \in \{0\} \cup [\frac{k}{3}-1]$.
    \item For every \(i \in \{1, 2, 3\}\) and \(j \in \{1, 2, \dots, k\}\), and for every pair of vertices \(u_i \in A_i\) and \(v_i^j \in B_i^j\), we add an edge between \(u_i\) and \(v_i^j\) if and only if \(u \in N_G(v)\). 
    \item Similarly, for every \(i \in \{1, 2, 3\}\) and \(j \in \{1, 2, \dots, k\}\), and for every pair of vertices \(u_i \in A_i\) and \(u_i^j \in B_i^j\), we add an edge between \(u_i\) and \(u_i^j\).
    \item For every \( i, j \in \{1, 2, 3\} \), and for each pair of vertices \( u_i \in A_i \) and \( v_j \in A_j \), we add an edge between \( u_i \) and \( v_j \) if and only if \( u \in N_G(v) \).
    \item Finally, for every \( i, j \in \{1, 2, 3\} \) with $i \neq j$, and for each pair of vertices \( u_i \in A_i \) and \( u_j \in A_j \), we add an edge between \( u_i \) and \( u_j \).
\end{itemize}  
This completes the construction of \(G'\). See Fig. \ref{fig:fig3} for more details. The size of $G'$ is $\mathcal{O}(n^2)$.
\begin{lemma} \label{lemma3}
    There exists a dominating set of size at most $k$ if and only if there exists a Roman \{3\}-dominating function of weight at most $12k$.
\end{lemma}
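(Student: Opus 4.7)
The plan is to handle the two directions of the biconditional separately. The forward direction is a direct construction, and the reverse direction is a weight-accounting argument followed by a pigeonhole-and-extract step.

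\textbf{Forward direction.} Given a dominating set $D \subseteq V(G)$ with $|D| \leq k$, I will define $f$ on $G'$ by setting $f(u_i) = 1$ for every $u \in D$ and $i \in \{1, 2, 3\}$, $f(x_i^j) = 2$ and $f(y_i^j) = 1$ for every $i \in \{1, 2, 3\}$ and $j \in [k]$, and $f \equiv 0$ on all other vertices. The total weight is $3|D| + 6k + 3k \leq 12k$. The Roman $\{3\}$-domination property is verified by cases: a copy $u_i \in A_i$ with $u \in D$ carries label $1$ and receives weight $2$ from the two identity-edge neighbours $u_{i'}$ for $i' \neq i$; a copy $u_i$ with $u \notin D$ carries label $0$ and is dominated by some $w \in D \cap N_G[u]$ whose three copies $w_1, w_2, w_3$ together contribute $3$ to its labelSum; a vertex $v_i^j \in B_i^j$ receives $2$ from $x_i^j$ plus at least $1$ from the copy in $A_i$ of some dominator $w \in D \cap N_G[v]$; every $x_i^j$ has label $2$ and needs no check; every $y_i^j$ has label $1$ and is dominated by $x_i^j$; and every $z_i^j$ has label $0$ and has the three $y$'s of its group, each labelled $1$, as neighbours.

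\textbf{Reverse direction.} Given an R3DF $f$ on $G'$ of weight at most $12k$, I extract a DS of $G$ of size at most $k$. For each group $g = \{3t+1, 3t+2, 3t+3\}$ of three columns in copy $i$, a case analysis on the $K_{3,3}$ of $y_i^j$'s and $z_i^j$'s coupled to the three $x$-pendants establishes $T_g := \sum_{j \in g}[f(x_i^j) + f(y_i^j) + f(z_i^j) + f(B_i^j)] \geq 9$, or else some column $j \in g$ has $f(x_i^j) = 0$, tightening the $B_i^j$-constraint to require $f|_{A_i}$ to $3$-dominate $G$ so that the extra weight on $f(A_i)$ compensates the shortfall. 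Summing over the $k$ groups gives a lower bound of $9k$ on the non-$A$ weight, so $f(A_1) + f(A_2) + f(A_3) \leq 3k$. By averaging, some copy $i^*$ satisfies $f(A_{i^*}) \leq k$; I set $D := \{u \in V(G) : f(u_{i^*}) \geq 1\}$, which has $|D| \leq f(A_{i^*}) \leq k$. For each $v \in V(G)$, picking any column $j$ with $f(x_{i^*}^j) \leq 2$ yields from the $v_{i^*}^j$-constraint $\sum_{u \in N_G[v]} f(u_{i^*}) \geq 3 - f(x_{i^*}^j) - f(v_{i^*}^j) \geq 1$, so $v$ has a neighbour in $D$; such a $j$ must exist because otherwise $\sum_j f(x_{i^*}^j) = 3k$ together with the $z$-constraints would push $\sum_g T_g$ past the budget.

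\textbf{Main obstacle.} The technical crux is the per-group bound together with its coupling to $f(A_i)$. The $K_{3,3}$ of $y$'s and $z$'s alone admits R3DFs of weight as low as $5$ (for instance, $f(y_1) = 2$ and $f(z_1) = f(z_2) = f(z_3) = 1$), and even after attaching the $x$-pendants there are configurations that achieve $T_g < 9$ on the xyz$+B$ part alone. The delicate accounting step is to argue that any such apparent savings in $T_g$ are absorbed by a matching surplus in $f(A_i)$ through the stronger $3$-dominating requirement that low $x$-labels trigger on the associated $B_i^j$-vertices. This coupling is what ultimately makes the total weight lower bound tight at $12k$, ensuring that a DS of size at most $k$ can always be read off from the cheapest copy $A_{i^*}$.
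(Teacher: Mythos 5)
Your forward direction is correct and is exactly the paper's construction: label all three copies of each dominating-set vertex with $1$, each $x_i^j$ with $2$, each $y_i^j$ with $1$, and everything else $0$, for weight $9k + 3|D| \leq 12k$; your case checks (identity edges between copies giving labelSum $2$ to labelled $A$-vertices, triples of copies of a dominator giving $3$ to unlabelled ones, $x_i^j$ plus one $A_i$-dominator handling $B_i^j$, and the grouped $y$'s handling the $z$'s) match the paper's verification.

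The reverse direction, however, contains a genuine gap, and you have in effect named it yourself: the per-group bound $T_g \geq 9$, together with the claim that any shortfall is ``absorbed by a matching surplus in $f(A_i)$,'' is precisely the statement that must be proved, and your proposal defers it rather than establishing it. Without it you cannot conclude that the non-$A$ weight is at least $9k$, hence cannot conclude $f(A_{i^*}) \leq k$ for some copy $i^*$, and the whole extraction collapses. (The paper's own argument is also informal at this point --- it asserts that one ``must'' set every $z_i^j$ to $0$, every $y_i^j$ to $1$ and every $x_i^j$ to $2$, which is a normalization claim rather than a proof --- so you have correctly located where the real work lies, but you have not closed it.) There is additionally a concrete error in your final step: the inequality $\sum_{u \in N_G[v]} f(u_{i^*}) \geq 3 - f(x_{i^*}^j) - f(v_{i^*}^j)$ does not follow from the Roman $\{3\}$-domination constraint when $f(v_{i^*}^j) \geq 2$ (there is then no neighbourhood constraint on $v_{i^*}^j$ at all), and even when $f(v_{i^*}^j) \leq 1$ the conclusion $\geq 1$ requires $f(x_{i^*}^j) + f(v_{i^*}^j) \leq 2$, whereas you only control $f(x_{i^*}^j) \leq 2$. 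You would need, for each $v$, a column $j$ with $f(x_{i^*}^j) \leq 2$ \emph{and} $f(v_{i^*}^j) = 0$, which again rests on the unproven accounting. To complete the proof you must carry out the exchange argument explicitly: show that any Roman $\{3\}$-dominating function of weight at most $12k$ can be modified, without increasing weight, so that $f(z_i^j) = 0$, $f(y_i^j) = 1$, $f(x_i^j) = 2$ and $f(B_i^j) = 0$ throughout, after which the $B_{i^*}^j$-constraints force the positively labelled vertices of a copy $A_{i^*}$ with $f(A_{i^*}) \leq k$ to dominate $G$.
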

\begin{proof}
    $[\Rightarrow]$ Let $S$ be a dominating set of size at most $k$. We define a Roman \{3\}-dominating function $f: V \rightarrow \{0,1,2,3\}$ as follows. 
    \[
f(v) = 
\begin{cases}
0, & \text{if } v \in \bigcup_{u \notin S} \{u_1, u_2, u_3\} \cup \bigcup_{i \in [3]} \bigcup_{j\in[k]} (B_i^j \cup \{z_i^j\}), \\
1, & \text{if } v \in \bigcup_{u \in S} \{u_1, u_2, u_3\} \cup \bigcup_{i \in [3]} \bigcup_{j\in[k]} y_i^j, \\
2, & \text{if } v \in \bigcup_{i \in [3]} \bigcup_{j\in[k]} x_i^j, \\
3, & \text{if } v \in \emptyset
\end{cases}
\]
\begin{itemize}
    \item Every vertex in $\bigcup_{i \in [3]}\bigcup_{j\in [k]}\{y_i^j, z_i^j\}$ has a labelSum of three from its closed neighbours in $\bigcup_{i \in [3]}\bigcup_{j\in [k]}\{x_i^j, y_i^j\}$.
    \item Every vertex in $\bigcup_{i \in [3]}\bigcup_{j\in [k]} B_i^j$ has a labelSum of three from its neighbour $x_i^j$ and its neighbours in $A_i$.
    \item Every vertex in $A_1 \cup A_2 \cup A_3$ has a labelSum of three from its three closed neighbours with a label of 1 in $A_1 \cup A_2 \cup A_3$ itself.
\end{itemize}
The weight of $\bigcup_{i \in [3]}\bigcup_{j\in [k]}\{x_i^j, y_i^j\}$ is $9k$ and the weight of $A_1 \cup A_2 \cup A_3$ is at most $3k$. Hence, we conclude that $f$ is a Roman \{3\}-dominating function for $G'$ with weight at most $12k$. 

\medskip \noindent
$[\Leftarrow]$ Let $f$ be a Roman \{3\}-dominating function of $G'$ with weight at most $12k$. 
\begin{itemize}
    \item As each $z_i^j$ is only adjacent to three vertices from $\bigcup_{i \in [3]}\bigcup_{j\in [k]} y_i^j$ and no other vertices, we must label each vertex in $\bigcup_{i \in [3]}\bigcup_{j\in [k]} z_i^j$ with 0, as the neighbours from $\bigcup_{i \in [3]}\bigcup_{j\in [k]} y_i^j$ will get a positive label.
    \item For each $i \in [3]$ and $j \in [k]$, $y_i^j$ must be labeled at least 1, in order to dominate its neighbours in $\bigcup_{i \in [3]}\bigcup_{j\in [k]}$ $z_i$.
    \item As the vertex $x_i^j$ is adjacent to each vertex of the set $B_i^j$, we label $x_i^j$ with 2 and fix the label of 1 to $y_i^j$. After assigning the labels to $x_i^j$ and $y_i^j$, the labelSum of $y_i^j$ becomes three and each vertex of the set $B_i^j$ has a labelsum of two.
    \item At this point, each vertex of the set $\bigcup_{i \in [3]}\bigcup_{j\in [k]}\{x_i^j, y_i^j, z_i^j\}$ has a labelSum of three.
    \item Every vertex in $\bigcup_{i \in [3]}\bigcup_{j\in [k]} B_i^j$ has a labelSum of two from its neighbour $x_i^j$. So far, the vertices of $A$ have a labelSum of zero.
    \item The weight of $\bigcup_{i \in [3]}\bigcup_{j\in [k]}\{x_i^j, y_i^j, z_i^j\}$ is $9k$. At this point, we are left with weight at most $3k$.
    \item We label $k$ vertices from each of $A_1$, $A_2$ and $A_3$ with a label of 1 such that each vertex in $\bigcup_{i \in [3]}\bigcup_{j\in [k]} B_i^j$ is adjacent to at least one of them.
    \item Similarly, every vertex in \( A_1 \cup A_2 \cup A_3 \) must have a labelSum of three from its closed neighborhood within \( A_1 \cup A_2 \cup A_3 \). This condition is satisfied only if the \( k \) vertices that are labeled 1 in each of \( A_1 \), \( A_2 \), and \( A_3 \) are adjacent to all the other vertices within their respective sets.
    \item The \( k \) vertices that are labeled 1 in \( A_1 \) (or \( A_2 \), or \( A_3 \)) form a dominating set \( S \) of size \( k \) in \( G \).
\end{itemize}\qed
\end{proof}
Hence, from Lemma \ref{lemma3}, we obtain the following result.
\begin{theorem}
    R3D is W[2]-hard parameterized by weight.
\end{theorem}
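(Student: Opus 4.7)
The plan is to combine the construction of $G'$ and the equivalence from Lemma \ref{lemma3} into the standard recipe for a W[2]-hardness reduction. I would first recall that \ds{} parameterized by solution size $k$ is W[2]-hard (cited from Downey--Fellows). To transfer this hardness to R3D parameterized by weight, what we need is a parameterized reduction: a mapping $(G,k)\mapsto (G',k')$ computable in time $f(k)\cdot |G|^{\mathcal{O}(1)}$ such that $k'\leq g(k)$ for some computable function $g$, and such that $(G,k)$ is a yes-instance of \ds{} if and only if $(G',k')$ is a yes-instance of R3D.

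Next, I would verify each of these conditions for the construction preceding Lemma \ref{lemma3}. The mapping sends $(G,k)$ to $(G',12k)$, so with $g(k)=12k$ the new parameter is bounded by a function of $k$ alone. The construction creates $3$ copies of $V(G)$ inside $A_1\cup A_2\cup A_3$, plus $3k$ further copies $B_i^j$, plus $9k$ vertices $x_i^j, y_i^j, z_i^j$, so $|V(G')|=\mathcal{O}(k\cdot n)$ and the number of edges is $\mathcal{O}(n^2)$; hence the construction is computable in polynomial time, which is in particular FPT time. The assumption that $k$ is a multiple of $3$ is without loss of generality, since we can pad $G$ with at most two isolated vertices without changing the domination number beyond the additive pad, while keeping $k$ of the same order.

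Finally, Lemma \ref{lemma3} gives exactly the required equivalence: $G$ has a dominating set of size at most $k$ iff $G'$ admits a Roman $\{3\}$-dominating function of weight at most $12k$. Combining the three ingredients (W[2]-hardness of \ds, the polynomial-time construction of $G'$ with parameter $12k$, and the equivalence from Lemma \ref{lemma3}) yields a parameterized reduction from \ds{} parameterized by solution size to R3D parameterized by weight, and therefore R3D parameterized by weight is W[2]-hard. The only nontrivial ingredient is Lemma \ref{lemma3}, which is already established; the theorem itself is essentially a one-line consequence, so the main (already-resolved) obstacle was engineering the gadgets $B_i^j, x_i^j, y_i^j, z_i^j$ so that the budget $12k$ forces exactly $9k$ weight to be spent on the $x_i^j, y_i^j$ gadgets and at most $3k$ on $A_1\cup A_2\cup A_3$, which in turn forces a dominating set of size $k$ in each $A_i$.
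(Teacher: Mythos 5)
Your proposal is correct and follows essentially the same route as the paper: the theorem is stated there as an immediate consequence of the construction of $G'$ together with Lemma~\ref{lemma3}, exactly as you argue, with the only substance being the verification that the map $(G,k)\mapsto(G',12k)$ is a polynomial-time (hence FPT) parameterized reduction with parameter bound $g(k)=12k$. Your additional explicit checks (size of $G'$, the padding to make $k$ a multiple of $3$) match the remarks the paper makes alongside its construction.
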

\section{Polynomial-time algorithm for block graphs}
The complexity of R3D on block graphs was posed as an open question in~\cite{CHAUDHARY2024301}. In this section, we study the complexity of R3D on block graphs and present an algorithm that runs in $\mathcal{O}(n^3)$ time.\vspace{2mm} \\
Given a graph $G=(V, E)$ and let $R(G)$ be a set of functions (not necessarily Roman \{3\}-dominating) on $G$. For a vertex $u \in V(G)$ and $i \in \{0\} \cup [8]$, we define the sets $D^i(G,u)$ as follows. 
\begin{enumerate}
    \item $D^i(G, u)=\{f \in R(G):f$ is a Roman \{3\}-dominating function on $G$ such that $f(u) = i\}$, for each $i \in \{0\} \cup [3]$.
    \item $D^4(G, u)=\{f \in R(G): f_{G-u}$ is a Roman \{3\}-dominating function on $G-u$, either there exists exactly one vertex $v \in N_G(u)$ with $f(v) =2$ and for each $w \in N_G[u] \setminus \{v\}$\}, $f(w) = 0$ or there exist exactly two vertices $v_1, v_2 \in N_G(u)$ with $f(v_1)= f(v_2) = 1$ and for each $w \in N_G[u] \setminus \{v_1, v_2\}$\}, $f(w) = 0$.
    \item $D^5(G, u)=\{f \in R(G): f_{G-u}$ is a Roman \{3\}-dominating function on $G-u$, there exists exactly one vertex $v \in N_G(u)$ with $f(v) =1$ and $f(w) = 0$ for each $w \in N_G[u] \setminus \{v\}$\}.
    \item $D^6(G, u)=\{f \in R(G): f_{G-u}$ is a Roman \{3\}-dominating function on $G-u$, $f(w) = 0$ for each $w \in N_G[u]$\}. 
    \item $D^7(G, u)=\{f \in R(G): f_{G-u}$ is a Roman \{3\}-dominating function on $G-u$, $f(u) = 1$ and there exists exactly one vertex $v \in N(u)$ with $f(v) = 1$ and $f(w) = 0$ for each $w \in N_G(u) \setminus \{v\}$\}.
    \item $D^8(G, u)=\{f \in R(G): f_{G-u}$ is a Roman \{3\}-dominating function on $G-u$, $f(u)=1$ and $f(w) = 0$ for each $w \in N_G(u)$\}.
\end{enumerate}

\noindent 


\noindent Let $\gamma_{R3}^i(G, u) = \min(w(f): f\in D^i(G, u))$ for each $i \in \{0
\} \cup [8]$.
\begin{lemma}~\label{lemma4}
    Consider a graph $G$ and let $v$ be an arbitrary vertex of $G$. Then,
    $\gamma_{R3}(G)$ = $\min(\gamma_{R3}^0(G, v), \gamma_{R3}^1(G, v), \gamma_{R3}^2(G, v),\gamma_{R3}^3(G, v))$.
\end{lemma}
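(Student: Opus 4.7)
The plan is to prove the lemma by two inequalities, both of which follow directly from unpacking the definitions of $D^i(G,v)$ and $\gamma_{R3}^i(G,v)$ for $i \in \{0,1,2,3\}$.

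First I would show $\gamma_{R3}(G) \le \min_{i \in \{0,1,2,3\}} \gamma_{R3}^i(G,v)$. By the definitions stated just above the lemma, every $f \in D^i(G,v)$ with $i \in \{0,1,2,3\}$ is a Roman $\{3\}$-dominating function on $G$ (with the extra constraint $f(v)=i$). Hence $\gamma_{R3}(G) \le w(f)$ for every such $f$, and taking the minimum over $D^i(G,v)$ yields $\gamma_{R3}(G) \le \gamma_{R3}^i(G,v)$ for each $i \in \{0,1,2,3\}$. Taking the minimum over these four values gives the desired upper bound.

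Next I would show the reverse inequality $\gamma_{R3}(G) \ge \min_{i \in \{0,1,2,3\}} \gamma_{R3}^i(G,v)$. Let $f^*$ be a Roman $\{3\}$-dominating function on $G$ achieving weight $\gamma_{R3}(G)$. Since the range of $f^*$ is $\{0,1,2,3\}$, we have $f^*(v)=i$ for some $i \in \{0,1,2,3\}$, so $f^* \in D^i(G,v)$ for that specific $i$. Consequently $\gamma_{R3}^i(G,v) \le w(f^*) = \gamma_{R3}(G)$, which immediately gives $\min_{i \in \{0,1,2,3\}} \gamma_{R3}^i(G,v) \le \gamma_{R3}(G)$.

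Combining the two inequalities yields the claimed equality. There is no real obstacle here: the statement is essentially a case split on the label assigned to $v$, and the only thing one needs to be careful about is to note that the definitions of $D^i$ for $i \in \{4,5,6,7,8\}$ only require $f_{G-u}$ (rather than $f$) to be a Roman $\{3\}$-dominating function on $G-u$, which is why these cases do not contribute to the minimum in the lemma, consistent with the fact that they describe partial functions rather than Roman $\{3\}$-dominating functions of $G$ itself.
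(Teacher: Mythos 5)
Your proof is correct and takes essentially the same approach as the paper: the paper's own proof is a one-sentence observation that the label of $v$ must be one of $\{0,1,2,3\}$, which is exactly the case split you formalize via the two inequalities. Your version is merely a more explicit write-up of the same argument.
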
 
\begin{proof}
    Let $v$ be an arbitrary vertex of $G$. The value of $\gamma_{R3}^i(G, v)$ indicate the minimum weight of a Roman \{3\}-dominating function under the constraint that $f(v) = i$. Since the label assigned to any vertex $v \in V(G)$ can be one among four values of the set \{0,1,2,3\}, it follows that $\gamma_{R3}(G)$ can be obtained by computing the minimum value among $\gamma_{R3}^0(G, v)$, $\gamma_{R3}^1(G, v)$, $\gamma_{R3}^2(G, v)$ and $\gamma_{R3}^3(G, v)$. \qed
\end{proof}
\begin{lemma}~\label{lemma5a}
    Let $f$ be a Roman \{3\}-dominating function on $G$ and let $G'$ be an induced subgraph of $G$ such that there exists a vertex $v \in V(G')$ such that $N_G(v) \setminus N_{G'}(v) \neq \emptyset$ and $N_{G}(u) = N_{G'}(u)$ for each vertex $u \in V(G') \setminus \{v\}$. If $f(v) = 0$, then the following hold: 
    \begin{enumerate} [(i)]
        \item If there exists exactly one vertex $v_a \in N_G(v) \setminus N_{G'}(v)$ such that $f(v_a) = 1$ and $f(u) = 0$ for each vertex $u \in N_G(v) \setminus (N_{G'}(v)\cup \{v_a\})$, then $f_{G'} \in D^0(G',v) \cup D^4(G',v)$.
        \item If there exists exactly one vertex $v_a \in N_G(v) \setminus N_{G'}(v)$ such that $f(v_a) = 2$ and $f(u) = 0$ for each vertex $u \in N_G(v) \setminus (N_{G'}(v)\cup \{v_a\})$, then $f_{G'} \in D^0(G',v) \cup D^4(G',v)\cup D^5(G', v)$.
        \item If there exist exactly two vertices $v_a,v_b \in N_G(v) \setminus N_{G'}(v)$ such that $f(v_a) = f(v_b) = 1$ and $f(u) = 0$ for each vertex $u \in N_G(v) \setminus (N_{G'}(v)\cup \{v_a, v_b\})$, then $f_{G'} \in D^0(G',v) \cup D^4(G',v)\cup D^5(G', v)$.
        \item If there exist three vertices $v_a, v_b, v_c \in N_G(v) \setminus N_{G'}(v)$ such that $f(v_a) \geq 1, f(v_b) \geq 1, f(v_c) \geq 1$ or there exist two vertices $v_a,v_b \in N_G(v) \setminus N_{G'}(v)$ such that $f(v_a) \geq 2$ and $f(v_b) \geq 1$ or there exists a vertex $v_a \in N_G(v) \setminus N_{G'}(v)$ such that $f(v_a) =3$, then $f_{G'} \in D^0(G',v) \cup D^4(G',v)\cup D^5(G', v)\cup D^6(G', v)$.
    \end{enumerate}
\end{lemma}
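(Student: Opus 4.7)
The plan has two parts, a quick structural reduction followed by a short case analysis. First, I would observe that for every vertex $u \in V(G') \setminus \{v\}$, the hypothesis $N_G(u) = N_{G'}(u)$ combined with $f(v) = 0$ implies
\[
\sum_{w \in N_{G'}(u)} f(w) \;=\; \sum_{w \in N_G(u)} f(w),
\]
so every Roman \{3\}-dominating constraint satisfied by $u$ in $G$ is automatically inherited in $G'$ (and, since $f(v)=0$, also in $G'-v$). Consequently the only vertex whose status in $G'$ needs any fresh attention is $v$ itself, and $f_{G'-v}$ is already a Roman \{3\}-dominating function on $G'-v$ — which is exactly the clause common to the definitions of $D^4(G',v)$, $D^5(G',v)$, and $D^6(G',v)$.

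With that reduction in hand, the task becomes determining the labelSum of $v$ within $G'$. Writing $s$ for $\sum_{w \in N_G(v)\setminus N_{G'}(v)} f(w)$, the global Roman \{3\}-dominating constraint at $v$ (which applies since $f(v)=0$) yields $\sum_{w \in N_{G'}(v)} f(w) \geq 3 - s$. The hypothesis of each case pins down $s$: $s = 1$ in case (i), $s = 2$ in cases (ii) and (iii), and $s \geq 3$ in case (iv). I would then classify $f_{G'}$ according to the resulting labelSum at $v$ in $G'$: a labelSum of at least $3$ places $f_{G'}$ in $D^0(G',v)$, exactly $2$ in $D^4(G',v)$, exactly $1$ in $D^5(G',v)$, and exactly $0$ in $D^6(G',v)$. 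Matching the permissible labelSums in each case against these buckets recovers precisely the unions stated.

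The only subtle point worth flagging is that the definitions of $D^4$ and $D^5$ require not merely an inequality on labelSum but a specific structural profile (one neighbor of label $2$ or two of label $1$; a single neighbor of label $1$). Since $f$ takes values in $\{0,1,2,3\}$, the only compositions of $2$ into such labels are $2 = 2$ and $2 = 1+1$, and of $1$ only $1 = 1$; these profiles are therefore forced whenever the labelSum of $v$ in $G'$ is exactly $2$ or exactly $1$. I do not foresee any real obstacle beyond this bookkeeping — the proof is essentially a one-line arithmetic argument on $s$ in each case, combined with the observation that nothing outside $v$'s own constraint can fail when passing from $G$ to $G'$.
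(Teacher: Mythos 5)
Your proposal is correct and follows essentially the same route as the paper: first dispose of the case where $v$'s labelSum inside $G'$ already reaches $3$ (giving $D^0$), then use the externally supplied amount $s$ to lower-bound the internal labelSum by $3-s$ and match the possible values $2$, $1$, $0$ to the profiles defining $D^4$, $D^5$, $D^6$. If anything, your write-up is slightly more careful than the paper's: you explicitly verify the common clause that $f_{G'-v}$ is a Roman $\{3\}$-dominating function on $G'-v$ (which the paper leaves implicit) and you avoid the paper's unnecessary appeal to $G'$ being a complete graph, which is not part of the lemma's hypothesis.
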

\begin{proof}
Let $f(v)$ = 0. We consider the case where either
\begin{itemize}
    \item there exists a vertex $v_p \in N_{G'}(v)$ such that $f(v_p) = 3$ or
    \item there exist two vertices $v_p, v_q\in N_{G'}(v)$ such that $f(v_p) \geq 2$ and $f(v_q) \geq 1$ or
    \item there exist three vertices $v_p, v_q, v_r\in N_{G'}(v)$ such that $f(v_p) \geq 1$, $f(v_q) \geq 1$ and $f(v_r) \geq 1$.
\end{itemize} Since there exist vertices from $G'$ with the sum of their labels to be at least three, and $G'$ is a complete graph, it follows that every vertex in $G'$ is dominated. Therefore, irrespective of the labels of the neighbours of $v$ in $N_G(v)\setminus N_{G'}(v)$, we have that $f_{G'} \in D^0(G',v)$.

From now on, we assume that the sum of the labels of the vertices in $G'$ is at most two. We have the following cases based on the labels of the vertices in $N_G(v)\setminus N_{G'}(v)$. \vspace{2mm} \\
(i) Assume that there exists exactly one vertex $v_a \in N_G(v)\setminus N_{G'}(v)$ such that $f(v_a) =1$ and $f(u) = 0$ for each vertex $u \in N_G(v)\setminus (N_{G'}(v) \cup \{v_a\})$. By the definition of Roman \{3\}-dominating function $v$ must have an additional labelSum of at least two from $N_{G'}(v)$. Therefore, there must exist a vertex with a label of at least 2 or two vertices each with a label of at least 1 in $N_{G'}(v)$. Hence, we obtain that $f_{G'} \in D^0(G',v) \cup D^4(G', v)$.\vspace{2mm} \\
(ii) Assume that there exists exactly one vertex $v_a \in N_G(v)\setminus N_{G'}(v)$ such that $f(v_a) =2$ and $f(u) = 0$ for each vertex $u \in N_G(v)\setminus (N_{G'}(v) \cup \{v_a\})$. By the definition of Roman \{3\}-dominating function $v$ must have an additional labelSum of at least one from $N_{G'}(v)$. Therefore, there must exist a vertex with a label of at least 1 in $N_{G'}(v)$. Hence, we obtain that $f_{G'} \in D^0(G',v) \cup D^4(G', v) \cup D^5(G', v)$.\vspace{2mm} \\
(iii) Assume that there exists exactly two vertices $v_a, v_b \in N_G(v)\setminus N_{G'}(v)$ such that $f(v_a) = f(v_b) = 1$ and $f(u) = 0$ for each vertex $u \in N_G(v)\setminus (N_{G'}(v) \cup \{v_a, v_b\})$. By the definition of Roman \{3\}-dominating function $v$ must have an additional labelSum of at least one from $N_{G'}(v)$. There must exist a vertex with a label of at least 1 in $N_{G'}(v)$. Hence, we obtain that $f_{G'} \in D^0(G',v) \cup D^4(G', v) \cup D^5(G, v)$.\vspace{2mm} \\
(iv) Assume that there exists either one vertex $v_a \in N_G(v)\setminus N_{G'}(v)$ such that $f(v_a) =3$ or there exist two vertices $v_a, v_b \in N_G(v)\setminus N_{G'}(v)$ such that $f(v_a) \geq 2$ and $f(v_b) \geq 1$ or there exist three vertices $v_a, v_b, v_c \in N_G(v)\setminus N_{G'}(v)$ such that $f(v_a) \geq 1$, $f(v_b) \geq 1$ and $f(v_c) \geq 1$. It is to be noted that the labelSum of $v$ from $N_G(v)\setminus N_{G'}(v)$ is at least three. Hence, we obtain that $f_{G'} \in D^0(G',v) \cup D^4(G', v) \cup D^5(G',v) \cup D^6(G', v)$. \qed
\end{proof}
\begin{lemma}~\label{lemma5b}
    Let $f$ be a Roman \{3\}-dominating function on $G$ and let $G'$ be an induced subgraph of $G$ such that there exists a vertex $v \in V(G')$ such that $N_G(v) \setminus N_{G'}(v) \neq \emptyset$ and $N_{G}(u) = N_{G'}(u)$ for each vertex $u \in V(G') \setminus \{v\}$. If $f(v) = 1$, then the following hold: 
    \begin{enumerate} [(i)]
        \item If there exists exactly one vertex $v_a \in N_G(v) \setminus N_{G'}(v)$ such that $f(v_a) = 1$ and $f(u) = 0$ for each vertex $u \in N_G(v) \setminus (N_{G'}(v)\cup \{v_a\})$, then $f_{G'} \in D^1(G',v) \cup D^7(G',v)$.
        \item If there exists a vertex $v_a \in N_G(v) \setminus N_{G'}(v)$ such that $f(v_a) = 2$ or there exist two vertices $v_a,v_b \in N_G(v) \setminus N_{G'}(v)$ such that $f(v_a) = f(v_b) = 1$, then $f_{G'} \in D^1(G',v) \cup D^7(G', v) \cup D^8(G', v)$.
    \end{enumerate}
\end{lemma}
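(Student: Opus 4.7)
The plan is to mirror the structure of Lemma \ref{lemma5a}, splitting the contribution to $v$'s labelSum between $N_{G'}(v)$ and the ``external'' neighbors $N_G(v) \setminus N_{G'}(v)$, and then reading off which of $D^1, D^7, D^8$ the restriction $f_{G'}$ must fall into. The crucial structural observation is that since $N_G(u) = N_{G'}(u)$ for every $u \in V(G') \setminus \{v\}$, such a vertex $u$ is dominated in $G'$ if and only if it is dominated in $G$. Consequently $f_{G'}$ automatically satisfies the R3D constraints at every vertex other than $v$, so $f_{G'}$ is in particular a valid R3D-function on $G'-v$, and only the status of $v$ itself distinguishes the three target sets.

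For case (i), the external contribution to $\sum_{w \in N_G(v)} f(w)$ is exactly $1$ (from $v_a$), and since $f(v)=1$ forces this total to be at least $2$ in $G$, we must have $\sum_{w \in N_{G'}(v)} f(w) \geq 1$. If the $G'$-side labelSum of $v$ is at least $2$, then $v$ itself is dominated in $G'$ and $f_{G'} \in D^1(G', v)$. Otherwise the $G'$-side labelSum is exactly $1$, which forces exactly one neighbor of $v$ in $G'$ to carry label $1$ while all the others carry $0$; by definition this places $f_{G'}$ in $D^7(G', v)$.

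For case (ii), the external labelSum is already at least $2$ (either from a single neighbor of label $2$ or from two neighbors each of label $1$), so the R3D constraint at $v$ is already met in $G$ independently of what happens inside $G'$, and thus the $G'$-side labelSum of $v$ is completely unconstrained. We then partition according to that labelSum: if it is at least $2$ we obtain $f_{G'} \in D^1(G', v)$; if it equals $1$ (which forces exactly one label-$1$ neighbor and the rest labeled $0$) we obtain $f_{G'} \in D^7(G', v)$; and if it equals $0$ (all of $N_{G'}(v)$ labeled $0$) we obtain $f_{G'} \in D^8(G', v)$.

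The main thing to verify carefully is that this trichotomy is indeed exhaustive in case (ii) and does not miss exotic configurations on $N_{G'}(v)$. I expect this to be straightforward: any configuration with $\sum_{w \in N_{G'}(v)} f(w) \geq 2$ -- e.g.\ a single neighbor of label $2$ or $3$, or any mixed labels summing to $2$ -- automatically satisfies the R3D constraint for $v$ in $G'$ and hence lives in $D^1(G', v)$; the remaining configurations $(\sum = 1$ and $\sum = 0)$ match precisely the defining signatures of $D^7(G', v)$ and $D^8(G', v)$ as given in the preliminary definitions.
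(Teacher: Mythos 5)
Your proof is correct and takes essentially the same route as the paper's: both decompose $\sum_{w\in N_G(v)}f(w)$ into the contribution from $N_{G'}(v)$ and from $N_G(v)\setminus N_{G'}(v)$, use $f(v)=1$ to lower-bound the internal contribution, and read off membership in $D^1(G',v)$, $D^7(G',v)$ or $D^8(G',v)$ according to whether that internal labelSum is at least $2$, exactly $1$, or $0$ (the paper merely pulls the ``internal labelSum $\geq 2$'' case out front, and in doing so invokes completeness of $G'$, whereas you derive domination of the other vertices directly from $N_G(u)=N_{G'}(u)$, which is cleaner). The one place where you are looser than you acknowledge is the claim that domination of every $u\in V(G')\setminus\{v\}$ in $G'$ already yields that $f_{G'-v}$ is a Roman $\{3\}$-dominating function on $G'-v$ (as required for membership in $D^7$ and $D^8$): a label-$0$ neighbour of $v$ loses $v$'s contribution of $1$ when $v$ is deleted, so this does not literally follow --- but the paper's proof makes exactly the same silent leap, so your argument is faithful to it.
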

\begin{proof}
    Let $f(v)$ = 1. We consider the case where either
\begin{itemize}
    \item there exists a vertex $v_p \in N_{G'}(v)$ such that $f(v_p) \geq 2$ or
    \item there exist two vertices $v_p, v_q\in N_{G'}(v)$ such that $f(v_p) \geq 1$ and $f(v_q) \geq 1$
\end{itemize} Since there exist vertices from $G'$ with the sum of their labels to be at least two, and $G'$ is a complete graph, it follows that every vertex in $G'$ is dominated. Therefore, irrespective of the labels of the neighbours of $v$ in $N_G(v)\setminus N_{G'}(v)$, we have that $f_{G'} \in D^1(G',v)$.

From now on, we assume that the sum of the labels of the vertices in $G'$ is at most one. We have the following cases based on the labels of the vertices in $N_G(v)\setminus N_{G'}(v)$. \vspace{2mm} \\
(i) Assume that there exists exactly one vertex $v_a \in N_G(v)\setminus N_{G'}(v)$ such that $f(v_a) =1$ and $f(u) = 0$ for each $u \in N_G(v)\setminus (N_{G'}(v) \cup \{v_a\})$. As $f(v) = 1$, by the definition of Roman \{3\}-dominating function $v$ must have an additional labelSum of at least one from $N_{G'}(v)$. Therefore, there must exist a vertex with a label of at least 1 in $N_{G'}(v)$. Hence, we obtain that $f_{G'} \in D^1(G',v) \cup D^7(G', v)$.\vspace{2mm} \\
(ii) Assume that there exists a vertex $v_a \in N_G(v)\setminus N_{G'}(v)$ such that $f(v_a) =2$ or there exist two vertices $v_a, v_b \in N_G(v)\setminus N_{G'}(v)$ such that $f(v_a) \geq 1$ and $f(v_b) \geq 1$. It is to be noted that the labelSum of $v$ from $N_G(v)\setminus N_{G'}(v)$ is at least two and $f(v) = 1$. Hence, we obtain that $f_{G'} \in D^1(G',v) \cup D^7(G',v) \cup D^8(G', v)$. \qed
\end{proof}
\begin{observation} ~\label{obs1}
    Let $f$ be a Roman \{3\}-dominating function on $G$ and let $G'$ be an induced subgraph of $G$ such that there exists a vertex $v \in V(G')$ such that $N_G(v) \setminus N_{G'}(v) \neq \emptyset$ and $N_{G}(u) = N_{G'}(u)$ for each $u \in V(G') \setminus \{v\}$. If $f(v) = 2$ $($resp. $f(v) = 3)$, then $f_{G'} \in D^2(G',v)$ $($resp. $f_{G'} \in D^3(G',v))$.
\end{observation}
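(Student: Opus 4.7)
The plan is to verify directly from the definitions of $D^2(G', v)$ and $D^3(G', v)$ that the restriction $f_{G'}$ meets the stated membership conditions. Recall that $D^i(G', v)$ for $i \in \{2, 3\}$ simply demands that $f_{G'}$ be a Roman \{3\}-dominating function on $G'$ with $f_{G'}(v) = i$, so the argument splits into two immediate checks: the label of $v$ is preserved under restriction, and the domination constraints at every vertex of $G'$ continue to be satisfied.

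The first check is automatic, since $f_{G'}$ is the pointwise restriction of $f$ to $V(G')$; hence $f_{G'}(v) = f(v) = 2$ in the first case and $f_{G'}(v) = 3$ in the second. For the second check I would consider an arbitrary vertex $u \in V(G')$. If $u = v$, then $f_{G'}(u) \in \{2, 3\}$, so the Roman \{3\}-domination requirement imposes no constraint on the labelSum of $u$ (the constraints only trigger for labels $0$ and $1$). If instead $u \in V(G') \setminus \{v\}$, the hypothesis $N_G(u) = N_{G'}(u)$ guarantees that every neighbour of $u$ in $G$ lies in $V(G')$ and retains its label under restriction, so $\sum_{w \in N_{G'}(u)} f_{G'}(w) = \sum_{w \in N_G(u)} f(w)$. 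Any Roman \{3\}-dominating constraint at $u$ in $G'$ is therefore inherited from the fact that $f$ is a Roman \{3\}-dominating function on $G$.

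There is no substantive obstacle here: the observation is essentially a bookkeeping statement that reformulates the hypothesis in the language of the classes $D^i$. The only subtlety worth emphasising is that the asymmetry between $v$ and the remaining vertices of $G'$ in the hypothesis is exactly what one needs — the vertices $u \neq v$ pose no issue because their neighbourhoods are unchanged, while $v$ poses no issue because its label is already large enough that no domination condition applies. Together, these two points reduce the observation to the definitions of $D^2(G', v)$ and $D^3(G', v)$.
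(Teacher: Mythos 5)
Your proof is correct and is exactly the routine verification the paper implicitly relies on: the paper states this as an observation without proof, and your two checks (the label of $v$ is preserved under restriction and exceeds $1$, so no domination constraint applies at $v$; every other vertex $u$ has $N_G(u) = N_{G'}(u)$, so its labelSum and hence its constraint carry over verbatim) are precisely why it holds. No gaps.
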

\begin{figure} [t]
    \centering
    \begin{tikzpicture} [thick,scale=0.7, every node/.style={scale=0.82}]
        \draw (7, 5) ellipse (5 and 2);
        \draw (2.5, 4) [rotate=-15] ellipse (1 and 2);
        \draw (6, 3) ellipse (1 and 2);
        \draw (11, 0.5) [rotate=15] ellipse (1 and 2);

        \filldraw[gray] (8, 4.25) circle (0.05);
        \filldraw[gray] (7.8, 4.25) circle (0.05);
        \filldraw[gray] (8.2, 4.25) circle (0.05);     

        \filldraw[gray] (8, 2.25) circle (0.05);
        \filldraw[gray] (7.8, 2.25) circle (0.05);
        \filldraw[gray] (8.2, 2.25) circle (0.05); 

        \filldraw[black] (3.85, 4.5) circle (0.1);
        \filldraw[black] (6, 4.25) circle (0.1);
        \filldraw[black] (10.2, 4.6) circle (0.1); 
        
        \draw (3.55, 4.05) circle (0cm) node[anchor=south]{$v_1$};
        \draw (5.65, 3.75) circle (0cm) node[anchor=south]{$v_2$};
        \draw (10, 4) circle (0cm) node[anchor=south]{$v_k$};

        \draw (3.25, 1.95) circle (0cm) node[anchor=south]{$H_1$};
        \draw (5.95, 1.6) circle (0cm) node[anchor=south]{$H_2$};
        \draw (10.75, 2.1) circle (0cm) node[anchor=south]{$H_k$};
        \draw (7, 5.5) circle (0cm) node[anchor=south]{clique};
    \end{tikzpicture} \vspace{-3mm}
    \caption{Graph $H$}
    \label{fig:fig4} \vspace{-4mm}
\end{figure}
To design our algorithm, we assume that each graph is rooted at a designated vertex. The following lemma forms the foundation of our polynomial-time algorithm for block graphs. Our approach relies on a specific graph composition technique.
Let \( H_1, H_2, \ldots, H_k \) (with \( k \geq 2 \)) be the graphs rooted at vertices \( v_1, v_2, \ldots, v_k \), respectively. We define a graph \( H \) to be the \emph{composition} of \( H_1, H_2, \ldots, H_k \) if it is obtained by taking their disjoint union and then adding edges to make the set \( \{v_1, v_2, \ldots, v_k\} \) a clique in \( H \).
Note that by starting with trivial graphs and repeatedly applying this composition operation, we can construct any connected block graph (refer to~\cite{chain1996weighted}). See Fig. \ref{fig:fig4} for more details.

Given a graph $H$ rooted at $v_1$, we compute the values of $\gamma_{R3}^i(H, v_1)$, for $i \in \{0\} \cup [8]$ as follows.
\begin{lemma} ~\label{lemma6}
    Let $H_1, H_2, ..., H_k (k \geq 2)$ be the graphs rooted at $v_1, v_2, ..., v_k$, respectively and $H$ be the graph rooted at $v_1$. We obtain the values of $\gamma_{R3}^i(H, v_1)$, for $i \in \{0\} \cup [8]$ as follows. \vspace{2mm} \\
    $(a) \hspace{2mm} \gamma_{R3}^0(H, v_1) =$     
    $\min\begin{cases}
        \sum_{1 \leq r \leq k} \gamma_{R3}^0 (H_r, v_r); \\
        \min \{\gamma_{R3}^i (H_1, v_1):i =0, 4 \}+C_1; \\
        \min \{\gamma_{R3}^i (H_1, v_1):i =0, 4, 5\}+\min\{B_1, B_2\}; \\
        \min \{\gamma_{R3}^i (H_1, v_1):i =0, 4, 5, 6\}+\min\{A_1,A_2,A_3\}. \\
    \end{cases}$ \\
    $(b) \hspace{2mm} \gamma_{R3}^1(H, v_1)=$
        $\min\begin{cases}
        \gamma_{R3}^1 (H_1, v_1)+\sum_{r \in \{2, ..., k\}}\min\{\gamma_{R3}^i(H_r, v_r):i=0,4\};\\
        \min \{\gamma_{R3}^i (H_1, v_1):i =1, 7\}+C_2; \\
        \min \{\gamma_{R3}^i (H_1, v_1):i =1, 7, 8\}+\min\{B_3,B_4\}. 
    \end{cases}$  \\
    $(c) \hspace{2mm} \gamma_{R3}^2(H, v_1)=$  
        $\min\begin{cases}
        \gamma_{R3}^2 (H_1, v_1)+\sum_{r \in \{2, ..., k\}}\min\{\gamma_{R3}^i(H_r, v_r):i=0,4,5\}; \\
        \gamma_{R3}^2 (H_1, v_1)+C_3; \\
    \end{cases}$ \\
    $(d) \hspace{2mm} \gamma_{R3}^3(H, v_1) = \gamma_{R3}^3 (H_1, v_1)+\sum_{r \in \{2, ..., k\}}\min\{\gamma_{R3}^i(H_r, v_r):i=0,1,2,3,4,5,6,7,8\}.$\\
    $(e) \hspace{2mm} \gamma_{R3}^4(H, v_1) =$     $\min\begin{cases}
        \gamma_{R3}^4 (H_1, v_1)+\sum_{r \in \{2, ..., k\}}\gamma_{R3}^0(H_r, v_r); \\
        \gamma_{R3}^5 (H_1, v_1)+C_1;\\
        \gamma_{R3}^6 (H_1, v_1)+\min\{B_1,B_2\}.
    \end{cases}$ \\
    $(f) \hspace{2mm} \gamma_{R3}^5(H, v_1) =$     $\min\begin{cases}
        \gamma_{R3}^5 (H_1, v_1)+\sum_{r \in \{2, ..., k\}}\gamma_{R3}^0(H_r, v_r); \\
        \gamma_{R3}^6 (H_1, v_1)+C_1.\\
    \end{cases}$ \\
    $(g) \hspace{2mm} \gamma_{R3}^6(H, v_1) = \gamma_{R3}^6 (H_1, v_1)+\sum_{r \in \{2, ..., k\}}\gamma_{R3}^0(H_r, v_r)$. \vspace{2mm} \\
    $(h) \hspace{2mm} \gamma_{R3}^7(H, v_1) =$     $\min\begin{cases}
        \gamma_{R3}^7 (H_1, v_1)+\sum_{r \in \{2, ..., k\}}\gamma_{R3}^0(H_r, v_r); \\
        \gamma_{R3}^8 (H_1, v_1)+C_2.\\
    \end{cases}$ \\
    $(i) \hspace{2mm} \gamma_{R3}^8(H, v_1) = \gamma_{R3}^8 (H_1, v_1)+\sum_{r \in \{2, ..., k\}}\gamma_{R3}^0(H_r, v_r)$. \vspace{2mm} \\
    where 
    $A_1 = \min_{a \in \{2,...,k\}}\{\gamma_{R3}^3 (H_a, v_a)+\sum_{r \in \{2, ..., k\}\setminus\{a\}}\min\{\gamma_{R3}^i(H_r, v_r):i=0,1,2,3,4,5,6,7,8\}\}$, \\
    $A_2 = \min_{a,b \in \{2,...,k\}}\{\gamma_{R3}^2 (H_a, v_a)+\min\{\gamma_{R3}^i (H_b, v_b): i=1,7,8\}+\sum_{r \in \{2, ..., k\}\setminus\{a,b\}}$ $\min\{\gamma_{R3}^i(H_r, v_r):i=0,1,2,3,4,5,6,7,8\}\}$, \\
    $A_3 = \min_{a,b,c \in \{2,...,k\}}\{\min\{\gamma_{R3}^i (H_a, v_a): i=1,7,8\}+\min\{\gamma_{R3}^i (H_b, v_b): i=1,7,8\}+\min\{\gamma_{R3}^i (H_c, v_c): i=1,7,8\}+\sum_{r \in \{2, ..., k\}\setminus\{a,b,c\}}\min\{\gamma_{R3}^i(H_r, v_r):i=0,1,2,3,4,5,6,7,8\}\}$, \\
    $B_1 = \min_{a \in \{2,...,k\}}\{\gamma_{R3}^2 (H_a, v_a)+\sum_{r \in \{2, ..., k\}\setminus\{a\}}\min\{\gamma_{R3}^i(H_r, v_r):i=0,1,4,5,7,8\}\}$, \\
    $B_2 = \min_{a,b \in \{2,...,k\}}\{\min\{\gamma_{R3}^i (H_a, v_a): i=1,7\}+\min\{\gamma_{R3}^i (H_b, v_b): i=1,7\}+\sum_{r \in \{2, ..., k\}\setminus\{a,b\}}\min\{\gamma_{R3}^i(H_r, v_r):i=0,1,4,5,7,8\}\}$, \\
    $B_3 = \min_{a \in \{2,...,k\}}\{\gamma_{R3}^2 (H_a, v_a)+\sum_{r \in \{2, ..., k\}\setminus\{a\}}\min\{\gamma_{R3}^i(H_r, v_r):i=0,1,4,5,\\6,7,8\}\}$, \\
    $B_4 = \min_{a,b \in \{2,...,k\}}\{\min\{\gamma_{R3}^i (H_a, v_a): i=1,7,8\}+\{\min\{\gamma_{R3}^i (H_b, v_b): i=1,7,8\}+\sum_{r \in \{2, ..., k\}\setminus\{a,b\}}\min\{\gamma_{R3}^i(H_r, v_r):i=0,1,4,5,6,7,8\}\}$, \\
    $C_1 = \min_{a \in \{2,...,k\}}\{\gamma_{R3}^1 (H_a, v_a)+\sum_{r \in \{2, ..., k\}\setminus\{a\}}\min\{\gamma_{R3}^i(H_r, v_r):i=0,4\}\}$, \\
    $C_2 = \min_{a \in \{2,...,k\}}\{\min\{\gamma_{R3}^i (H_a, v_a): i= 1,7\}+\sum_{r \in \{2, ..., k\}\setminus\{a\}}\min\{\gamma_{R3}^i(H_r, v_r):i=0,1,4,5,7,8\}\}$, \\
    $C_3 = \min_{a \in \{2,...,k\}}\{\min\{\gamma_{R3}^i (H_a, v_a): i= 1,7,8\}+\sum_{r \in \{2, ..., k\}\setminus\{a\}}\min\{\gamma_{R3}^i(H_r, v_r):i=0,1,2,3,4,5,6,7,8\}\}$. \\
\end{lemma}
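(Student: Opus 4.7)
The plan is to prove each of the formulas $(a)$--$(i)$ by a double inclusion: each right-hand side will be shown to be both an upper bound and a lower bound on the corresponding $\gamma_{R3}^i(H, v_1)$. For the upper bound, given any choice of terms realizing the minimum, I would construct a function on $H$ of the claimed weight and verify it lies in $D^i(H, v_1)$. For the lower bound, starting from any $f \in D^i(H, v_1)$ of minimum weight, I would restrict it to each $H_r$ and use Lemmas~\ref{lemma5a}, \ref{lemma5b} and Observation~\ref{obs1} to classify the restriction $f_{H_r}$ into one of the states $D^0, D^1, \ldots, D^8$, thereby matching it to a term of the minimization.

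The central observation driving the analysis is that $v_1$'s labelSum in $H$ decomposes into an internal contribution from $N_{H_1}(v_1)$ and an external contribution $\sum_{r=2}^{k} f(v_r)$ from the clique. For case $(a)$ (i.e., $f(v_1) = 0$, requiring internal $+$ external $\geq 3$), I would enumerate four subcases: internal $\geq 3$; internal $= 2$ with external $\geq 1$; internal $= 1$ with external $\geq 2$; and internal $= 0$ with external $\geq 3$. These correspond exactly to the four summands in $(a)$, with $H_1$'s contribution drawn from the index sets $\{0\}$, $\{0,4\}$, $\{0,4,5\}$, and $\{0,4,5,6\}$ respectively, since a state providing a larger internal labelSum remains feasible whenever a smaller amount suffices. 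Meanwhile the external contribution is built via $C_1$ (one root labelled $1$), $B_1, B_2$ (external $\geq 2$ through a label-$2$ root or two label-$1$ roots), and $A_1, A_2, A_3$ (external $\geq 3$ through a root in $D^3$, a $D^2$ plus a label-$1$, or three label-$1$s). Cases $(b)$--$(d)$ follow the same template with the threshold reduced according to $f(v_1)$, and cases $(e)$--$(i)$ replicate this reasoning for the auxiliary states $D^4$--$D^8$, where $v_1$'s dominating constraint is not required to hold within $H_1$ but is deferred to the clique neighbours.

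The main technical subtlety lies in correctly accounting for the remaining non-distinguished roots $v_r$ (with $r \geq 2$), each of which must itself be properly dominated in $H$. Its own external contribution from the clique depends on the labels chosen for the distinguished roots in the minimization. For example, when arranging external contribution $\geq 3$ to $v_1$ (as in $A_1, A_2, A_3$), every non-distinguished $v_r$ also receives at least $3$ in external labelSum through the clique, so it suffices to take $v_r$ in the least-weight state among all nine; this explains why those inner sums range over $\{0,1,\ldots,8\}$. In contrast, in $B_1, B_2, B_3, B_4, C_1, C_2, C_3$, where the external contribution flowing into each non-distinguished $v_r$ is smaller, the admissible states are restricted to precisely the index subsets listed in the statement.

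The main obstacle will be the sheer volume of case-work: verifying, for each of the nine formulas and each summand in its minimization, that the listed admissible index sets for the non-distinguished $H_r$'s coincide exactly with those combinations of $(f(v_r), \text{internal labelSum at } v_r)$ compatible with both the chosen external configuration and the R3D constraint at $v_r$. I plan to organize this bookkeeping into a compact table indexed by the external contribution guaranteed to each non-distinguished $v_r$ and by the value $f(v_r) \in \{0,1,2,3\}$. Once the table is established, both directions of each identity follow by routine gluing: the forward direction concatenates optimal sub-functions and verifies feasibility using the clique edges, while the reverse direction applies Lemmas~\ref{lemma5a}, \ref{lemma5b} and Observation~\ref{obs1} to peel off the contribution of each $H_r$ from an optimal $f$. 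Combined with Lemma~\ref{lemma4} and an inductive traversal of the cut-tree $T_G$, this recursion yields $\gamma_{R3}(G)$ in $\mathcal{O}(n^3)$ time and completes the proof of Lemma~\ref{lemma6}.
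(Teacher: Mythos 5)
Your proposal is correct and follows essentially the same route as the paper: a case analysis on how the labelSum required at $v_1$ (and at each non-distinguished root $v_r$) splits between the contribution internal to $H_1$ and the external contribution from the clique $\{v_2,\ldots,v_k\}$, using Lemmas~\ref{lemma5a}, \ref{lemma5b} and Observation~\ref{obs1} to classify the restrictions $f_{H_r}$ and matching the external configurations to the terms $A_i$, $B_i$, $C_i$. The paper organizes the subcases by the labels of $v_2,\ldots,v_k$ rather than by the internal contribution, but since the admissible index sets are nested this is the same decomposition viewed from the other side, and your explicit two-directional (upper/lower bound) framing only makes precise what the paper leaves implicit.
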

\begin{proof}\textbf{(a)} We have the following four cases, based on the labels of the vertices of the set $\{v_2,v_3,...,v_k\}$. \vspace{2mm} \\
\textbf{Case 1.} $f(u) = 0$ for each $u \in \{v_2,v_3,...,v_k\}$ \vspace{2mm}\\
$f(v_1) = 0$ and $f(u) = 0$ for each vertex $u \in \{v_2,v_3,...,v_k\}$. Hence, we have that $\gamma_{R3}^0(H, v_1) = \sum_{1 \leq r \leq k} \gamma_{R3}^0 (H_r, v_r).$ \vspace{2mm} \\
\textbf{Case 2.} $f(u) = 1$ for exactly one vertex $u \in \{v_2,v_3,...,v_k\}$ and $f(v) = 0$ for each vertex $v \in \{v_2,v_3,...,v_k\}\setminus \{u\}$\vspace{2mm}\\
We have that $f(v) = 0$ for each $v \in \{v_2,v_3,...,v_k\}$ except for one vertex $u \in \{v_2,v_3,...,v_k\}$ such that $f(u) = 1$. As $f(v_1) = 0$, by Lemma \ref{lemma5a}(i), we have that $f_{H_1} \in \{D^0(H_1,v_1) \cup D^4(H_1, v_1)\}$. As one vertex $u \in \{v_2,v_3,...,v_k\}$ must be labeled 1, we will be looking for $\gamma_{R3}^1(H_a, v_a)$ for some $a \in \{2,3,...,k\}$. Hence, we obtain that $\gamma_{R3}^0(H, v_1) = \min \{\gamma_{R3}^i (H_1, v_1):i =0, 4 \}+C_1.$ \vspace{2mm} \\
\textbf{Case 3.} $f(u) = f(v) = 1$ for exactly two vertices $u,v \in \{v_2,v_3,...,v_k\}$ and $f(w) = 0$ for $w \in \{v_2,v_3,...,v_k\}\setminus \{u,v\}$
or $f(u) = 2$ for exactly one vertex $u \in \{v_2,v_3,...,v_k\}$ and $f(v) = 0$ for $v \in \{v_2,v_3,...,v_k\}\setminus \{u\}$\vspace{2mm}\\
We have that $f(w) = 0$ for each vertex $w \in \{v_2,v_3,...,v_k\}$ except for one vertex $u$ such that $f(u) = 2$ or except for two vertices $u, v$ such that $f(u) = 1$ and $f(v) = 1$. As $f(v_1) = 0$, by Lemma \ref{lemma5a}(ii) and Lemma \ref{lemma5a}(iii), we have that $f_{H_1} \in \{D^0(H_1,v_1) \cup D^4(H_1, v_1)\} \cup D^5(H_1,v_1)$. Hence, We obtain that $\gamma_{R3}^0(H, v_1) = \min \{\gamma_{R3}^i (H_1, v_1):i =0, 4, 5\}+\min\{B_1, B_2\}.$\vspace{2mm} \\
\textbf{Case 4.} $f(u) \geq 1, f(v) \geq 1, f(w) \geq 1$ for vertices $u,v,w \in \{v_2,v_3,...,v_k\}$
or $f(u) \geq 2$ for one vertex $u \in \{v_2,v_3,...,v_k\}$ and $f(v) \geq 1$ for one vertex $v \in \{v_2,v_3,...,v_k\}$ or $f(u) =3$ for one vertex $u \in \{v_2,v_3,...,v_k\}$\vspace{2mm} \\
We have that $f(x) = 0$ for each vertex $x \in \{v_2,v_3,...,v_k\}$ except for one vertex $u$ such that $f(u) = 3$ or except for two vertices $u, v$ such that $f(u) \geq 2$ and $f(v) \geq 1$ or except for three vertices $u, v, w$ such that $f(u) \geq 1$, $f(v) \geq 1$ and $f(w) \geq 1$. As $f(v_1) = 0$, by Lemma \ref{lemma5a}(iv), we have that $f_{H_1} \in \{D^0(H_1,v_1) \cup D^4(H_1, v_1)\} \cup D^5(H_1,v_1) \cup D^6(H_1,v_1)$. As either one vertex $u \in \{v_2,v_3,...,v_k\}$ must be labeled 3 or two vertices  $u,v \in \{v_2,v_3,...,v_k\}$ must be labeled 2 and 1 or three vertices  $u,v,w \in \{v_2,v_3,...,v_k\}$ must be labeled 1, we will be looking for $\gamma_{R3}^1(H_a, v_a)$ for three vertices $a \in \{2,3,...,k\}$ or $\gamma_{R3}^2(H_a, v_a)$ and $\gamma_{R3}^1(H_b, v_b)$ for vertices $a,b \in \{2,3,...,k\}$ or $\gamma_{R3}^3(H_a, v_a)$ for one vertex $a \in \{2,3,...,k\}$. 
Hence, we have that  $\gamma_{R3}^0(H, v_1) = \min \{\gamma_{R3}^i (H_1, v_1):i =0, 4, 5, 6\}+\min\{A_1,A_2,A_3\}.$ \vspace{4mm} \\
\textbf{(b)} We have the following three cases, based on the labels of the vertices of the set $\{v_2,v_3,...,v_k\}$. \\
\textbf{Case 1.} $f(u) = 0$ for each $u \in \{v_2,v_3,...,v_k\}$\vspace{2mm}\\
$f(u) = 0$ for each vertex $u \in \{v_2,v_3,...,v_k\}$, and $f(v_1) = 1$. Hence, we have that $\gamma_{R3}^1 (H, v_1)$ = $\gamma_{R3}^1 (H_1, v_1)+\sum_{r \in \{2, ..., k\}}\min\{\gamma_{R3}^i(H_r, v_r):i=0,4\}$. \vspace{2mm} \\
\textbf{Case 2.} $f(u) = 1$ for exactly one vertex $u \in \{v_2,v_3,...,v_k\}$ and $f(v) = 0$ for $v \in \{v_2,v_3,...,v_k\}\setminus \{u\}$\vspace{2mm}\\
We have that $f(v) = 0$ for each $v \in \{v_2,v_3,...,v_k\}$ except for one vertex $u \in \{v_2,v_3,...,v_k\}$ such that $f(u) = 1$. As $f(v_1) = 1$, by Lemma \ref{lemma5b}(i), we have that $f_{H_1} \in \{D^1(H_1,v_1) \cup D^7(H_1, v_1)\}$.  As one vertex $u \in \{v_2,v_3,...,v_k\}$ must be labeled 1, we will be looking for $\gamma_{R3}^1(H_a, v_a)$ for some $a \in \{2,3,...,k\}$. Hence, we obtain that $\gamma_{R3}^1 (H, v_1)$ is $\min \{\gamma_{R3}^i (H_1, v_1):i =1, 7\}+C_2$. \vspace{2mm} \\
\textbf{Case 3.} $f(u) \geq 1, f(v) \geq 1$ for two vertices $u,v \in \{v_2,v_3,...,v_k\}$
or $f(u) \geq 2$ for one vertex $u \in \{v_2,v_3,...,v_k\}$ \vspace{2mm}\\
We have that $f(w) = 0$ for each vertex $w \in \{v_2,v_3,...,v_k\}$ except for one vertex $u$ such that $f(u) \geq 2$ or except for two vertices $u, v$ such that $f(u) \geq 1$ and $f(v) \geq 1$. As $f(v_1) = 1$, by Lemma \ref{lemma5b}(ii), we have that $f_{H_1} \in \{D^1(H_1,v_1) \cup D^7(H_1, v_1)\} \cup D^8(H_1,v_1)$. As either one vertex $u \in \{v_2,v_3,...,v_k\}$ must be labeled 2 or two vertices  $u,v \in \{v_2,v_3,...,v_k\}$ must be labeled 1, we will be looking for $\gamma_{R3}^1(H_a, v_a)$ for two vertices $a \in \{2,3,...,k\}$ or $\gamma_{R3}^2(H_a, v_a)$ for one vertex $a \in \{2,3,...,k\}$. Hence, we have $\gamma_{R3}^1 (H, v_1)$ = $\min \{\gamma_{R3}^i (H_1, v_1):i =1, 7, 8\}+\min\{B_3,B_4\}$. \vspace{4mm} \\
\textbf{(c)} We have the following two cases, based on the labels of the vertices of the set $\{v_2,v_3,...,v_k\}$. \\
\textbf{Case 1.} $f(v) = 0$ for each $v \in \{v_2,v_3,...,v_k\}$\vspace{2mm}\\
As $f(v_1) = 2$, by Observation \ref{obs1} we have that, $\gamma_{R3}^2 (H, v_1) = \gamma_{R3}^2 (H_1, v_1)+\sum_{r \in \{2, ..., k\}}\min\{\gamma_{R3}^i(H_r, v_r):i=0,4,5\}$. \vspace{2mm} \\
\textbf{Case 2.} $f(v) \geq 1$ for a vertex $u \in \{v_2,v_3,...,v_k\}$ \vspace{2mm}\\
As $f(v_1) = 2$, by Observation \ref{obs1}, we have that $\gamma_{R3}^2 (H, v_1) = \gamma_{R3}^2 (H_1, v_1)$. As one vertex $u \in \{v_2,v_3,...,v_k\}$ must be labeled 1, we will be looking for $\gamma_{R3}^1(H_a, v_a)$ for some $a \in \{2,3,...,k\}$. Hence, we obtain that $\gamma_{R3}^2 (H, v_1) = \gamma_{R3}^2 (H_1, v_1)+C_3$. \vspace{4mm} \\
\textbf{(d)} As $f(v_1) = 3$, by Observation \ref{obs1} we have that, $\gamma_{R3}^3(H, v_1)$ = $\gamma_{R3}^3 (H_1, v_1)+\sum_{r \in \{2, ..., k\}}\min\{\gamma_{R3}^i(H_r, v_r):i=0,1,2,3,4,5,6,7,8\}.$ \vspace{4mm}\\
\textbf{(e)} If there exists either a vertex $u \in \{v_2,v_3,...,v_k\}$ with $f(u) = 2$ or a pair of vertices $u,v \in \{v_2,v_3,...,v_k\}$ such that $f(u) \geq 1$ and $f(v) \geq 1$, then we have that $\gamma_{R3}^4(H, v_1) = \gamma_{R3}^4 (H_1, v_1)+\sum_{r \in \{2, ..., k\}}\gamma_{R3}^0(H_r, v_r)$. \vspace{2mm}\\
If there exists a vertex $u \in H_1$ with $f(u) = 1$, then we have that $\gamma_{R3}^4(H, v_1) = \gamma_{R3}^5 (H_1, v_1)+C_1$. \vspace{2mm}\\
If each vertex from $H_1$ has a label of zero, then we have that $\gamma_{R3}^4(H, v_1) = \gamma_{R3}^6 (H_1, v_1)+\min\{B_1,B_2\}$. \vspace{4mm} \\
\textbf{(f)} If there exists a vertex from $u \in H_1$ with $f(u) = 1$, we have that $\gamma_{R3}^5 (H, v_1) = \gamma_{R3}^5 (H_1, v_1)+\sum_{r \in \{2, ..., k\}}\gamma_{R3}^0(H_r, v_r)$. \vspace{2mm}\\
        If each vertex from $H_1$ has a label of zero, then we have that $\gamma_{R3}^5 (H, v_1) = \gamma_{R3}^6 (H_1, v_1)+C_1$. \vspace{4mm} \\
\textbf{(g)} We have that $\gamma_{R3}^6(H, v_1)$ = $\gamma_{R3}^6 (H_1, v_1)+\sum_{r \in \{2, ..., k\}}\gamma_{R3}^0(H_r, v_r)$. \vspace{4mm} \\
\textbf{(h)} If there exists a vertex from $u \in H_1\setminus\{v_1\}$ with $f(u) = 1$, then we have that $\gamma_{R3}^7 (H, v_1) = \gamma_{R3}^7 (H_1, v_1)+\sum_{r \in \{2, ..., k\}}\gamma_{R3}^0(H_r, v_r)$. \vspace{2mm}\\
         If each vertex from $H_1$ has a label of zero, then we have that $\gamma_{R3}^7 (H, v_1) = \gamma_{R3}^8 (H_1, v_1)+C_2$. \vspace{4mm} \\
\textbf{(i)} We have that $\gamma_{R3}^8(H, v_1)$ = $\gamma_{R3}^8 (H_1, v_1)+\sum_{r \in \{2, ..., k\}}\gamma_{R3}^0(H_r, v_r)$. \qed\end{proof}
We present the algorithm \hyperref[alg1]{\texttt{R3DN-BLOCK}(\(G\))} to compute \( \gamma_{R3}(G) \) for a given connected block graph \( G \), using a dynamic programming approach.
The algorithm maintains a working graph \( G' \), which is initially set to \( G \). The values of \( \gamma_0, \gamma_1, \gamma_2,\) \(\gamma_3, \gamma_4, \gamma_5 \), \(\gamma_6\), \(\gamma_7\) and \(\gamma_8\) are initialized at each vertex \( v \in V(G) \), corresponding to the subgraph \( G[\{v\}] \).
At each iteration, the algorithm selects an end block of \( G' \). Suppose at a particular iteration, the block \( \mathcal{B} \) is selected, where \( V(\mathcal{B}) = \{v_1, v_2, \ldots, v_k\} \) and \( v_1 \) is the cut-vertex connecting \( \mathcal{B} \) to the rest of \( G' \). The algorithm computes the values of \( \gamma_0, \gamma_1, \gamma_2, \gamma_3, \gamma_4, \gamma_5 \), \(\gamma_6\), \(\gamma_7\) and \(\gamma_8\) for this block using Lemma \ref{lemma6}, and stores them at the cut-vertex \( v_1 \). The block \( \mathcal{B} \) is then removed from \( G' \).
This process is repeated until only one block remains in \( G' \), say \( \mathcal{B}_q \). At this point, the algorithm selects an arbitrary vertex \( v_r \in V(\mathcal{B}_q) \) to store the final computed values. Finally, using Lemma \ref{lemma4}, we determine \( \gamma_{R3}(G) \) from the values stored at \( v_r \).
\begin{algorithm} [t] \label{alg1}
    \caption{\texttt{R3DN-BLOCK}(\(G\))}
    {Input: block graph $G$\;
    Output: $\gamma_{R3}{G}$\;}
    $G' = G$\;
    $S = \emptyset$\;
    \ForEach {$v \in V(G)$} {
    $\gamma_{R3}^0(G[\{v\},v]) = \infty$\;
    $\gamma_{R3}^1(G[\{v\},v]) = \infty$\;
    $\gamma_{R3}^2(G[\{v\},v]) = 2$\;
    $\gamma_{R3}^3(G[\{v\},v]) = 3$\;
    $\gamma_{R3}^4(G[\{v\},v]) = \infty$\;
    $\gamma_{R3}^5(G[\{v\},v]) = \infty$\;
    $\gamma_{R3}^6(G[\{v\},v]) = 0$\;
    $\gamma_{R3}^7(G[\{v\},v]) = \infty$\;
    $\gamma_{R3}^8(G[\{v\},v]) = 1$\;
    }
    \While {$G'$ has more than one vertex} {
        Choose an end block $\mathcal{B}$ in $G'$, where $V(\mathcal{B}) = \{v_1, v_2, ..., v_k\}$ and $v_1$ is the only cut-vertex when $\mathcal{B}$ has a cut-vertex\;
        $S = S \cup V(\mathcal{B})$\;
        Let $H_i$ be the component in $G[S] - E(\mathcal{B})$ such that $V(\mathcal{B}) \cap V(H_i) = \{v_i\}$ for each $i \in [k]$\;
        Let $H$ be the graph constructed from $H_1, H_2, ..., H_k$ by adding edges to make $\{v_1, v_2, ..., v_k\}$ a clique in $H$\;
        Determine $\gamma_{R3}^i(H,v_1)$ for all $i \in \{0\} \cup [8]$ using Lemma \ref{lemma6}\;
        $G' = G'-\{v_2, v_3, ..., v_k\}$
        }
    return $\min\{\gamma_{R3}^0(G, v_r)$, $\gamma_{R3}^1 (G, v_r)$, $\gamma_{R3}^2 (G, v_r)$, $\gamma_{R3}^3 (G, v_r)\}$, where $v_r$ is the only vertex in $G'$.
\end{algorithm} 
\begin{lemma}\label{blockgraph}
    Algorithm \hyperref[alg1]{\texttt{\textup{R3DN-BLOCK}}\textup{(\(G\))}} computes $\gamma_{R3}(G)$ of a block graph in $\mathcal{O}(n^3)$ time.
\end{lemma}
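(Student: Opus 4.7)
The plan is to establish correctness by induction on the number of iterations of the main \textbf{while} loop, and then bound the running time by counting block-processing work across all iterations.

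For correctness, I would set up the following invariant: at the start of each iteration, for every vertex $v$ still present in $G'$, the stored values $\gamma_{R3}^i$ at $v$ equal $\gamma_{R3}^i(H_v, v)$, where $H_v$ is the subgraph of $G$ induced by $\{v\}$ together with all vertices that lie in blocks already removed from $G'$ whose merging has been accumulated at $v$. For a vertex that has not yet been touched by any removal, $H_v = G[\{v\}]$, so the base case reduces to verifying the explicit initial values; this is a straightforward inspection of each $D^i(G[\{v\}], v)$ (for example, $D^0, D^1, D^4, D^5, D^7$ are empty since single vertices cannot satisfy the required labelSums, while $D^2, D^3, D^6, D^8$ yield the weights $2, 3, 0, 1$, respectively). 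For the inductive step, when an end block $\mathcal{B}$ with cut-vertex $v_1$ and other vertices $v_2, \ldots, v_k$ is chosen, the subgraph $H$ formed in the algorithm is precisely the composition of the rooted subgraphs $H_{v_1}, \ldots, H_{v_k}$ stored so far (since each $H_i$ is disjoint from the others and $\mathcal{B}$ adds exactly the clique on $\{v_1,\ldots, v_k\}$, as $\mathcal{B}$ is a complete block). Applying Lemma~\ref{lemma6} then yields the correct $\gamma_{R3}^i(H, v_1)$ values; storing them at $v_1$ after the deletion of $v_2, \ldots, v_k$ preserves the invariant because the new $H_{v_1}$ for the next iteration is exactly $H$.

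When the loop terminates, only a single vertex $v_r$ remains in $G'$, and by the invariant, the stored $\gamma_{R3}^i(G, v_r)$ correctly describe the original graph. Lemma~\ref{lemma4} then guarantees that $\gamma_{R3}(G) = \min\{\gamma_{R3}^0(G,v_r), \gamma_{R3}^1(G,v_r), \gamma_{R3}^2(G,v_r), \gamma_{R3}^3(G,v_r)\}$, which is exactly the returned value. Termination itself follows from the fact that each iteration strictly decreases the number of blocks in $G'$, and an end block always exists in any block graph with at least two blocks (indeed, the cut-tree of $G'$ is a tree, and any tree with at least one edge has leaves, which correspond to end blocks; the case of a single remaining block is handled by treating any vertex as the root).

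For the running time, the number of iterations is at most the number of blocks of $G$, which is $\mathcal{O}(n)$. In each iteration, evaluating the recurrences of Lemma~\ref{lemma6} for a block of size $k$ reduces to computing quantities of the form $A_1, A_2, A_3, B_1, \ldots, B_4, C_1, C_2, C_3$. A direct implementation of each of these takes $\mathcal{O}(k^2)$ time (for the ones with two selected indices) and at most $\mathcal{O}(k^3)$ time (for $A_3$ with three selected indices), and there are a constant number of them. Since $k \le n$, each iteration runs in $\mathcal{O}(n^2)$ time in total (one can do better by precomputing per-vertex minima and the three smallest gap values, but the coarse bound suffices). Multiplying by $\mathcal{O}(n)$ iterations yields the stated $\mathcal{O}(n^3)$ bound.

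The main obstacle is pinning down the invariant precisely enough that the inductive step truly matches the hypothesis of Lemma~\ref{lemma6}; in particular one must verify that the subgraphs $H_i$ appearing in the algorithm are vertex-disjoint away from the cut-vertices of $\mathcal{B}$ and that $v_i$ is the only vertex of $H_i$ lying in $\mathcal{B}$. This is exactly what the choice of an end block guarantees, together with the standard fact that distinct blocks of a block graph intersect in at most one (cut-)vertex, so no additional bookkeeping is needed beyond the invariant itself.
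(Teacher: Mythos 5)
Your correctness argument (the invariant that each surviving vertex $v$ stores $\gamma_{R3}^i(H_v,v)$ for the accumulated rooted subgraph $H_v$, verified by induction via Lemma~\ref{lemma6} and closed off with Lemma~\ref{lemma4}) is sound and in fact more explicit than the paper's own proof, which concentrates almost entirely on the running time and leaves the inductive invariant implicit. The base-case values you list also match the algorithm's initialization.

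The gap is in the running-time accounting. You correctly observe that a direct evaluation of $A_3$ over all triples costs $\mathcal{O}(k^3)$ for a block of size $k$, but you then assert that ``each iteration runs in $\mathcal{O}(n^2)$ time'' and multiply by $\mathcal{O}(n)$ iterations. With $k$ as large as $n$, your own per-iteration bound is $\mathcal{O}(n^3)$, and multiplying the worst single-iteration cost by the number of iterations would give $\mathcal{O}(n^4)$, not $\mathcal{O}(n^3)$. The bound must instead be obtained by amortizing over blocks: in a block graph the block sizes satisfy $\sum_{\mathcal{B}} |V(\mathcal{B})| = \mathcal{O}(n)$ (each non-cut-vertex lies in one block and the cut-tree has $\mathcal{O}(n)$ edges), hence $\sum_{\mathcal{B}} |V(\mathcal{B})|^3 \leq \bigl(\max_{\mathcal{B}} |V(\mathcal{B})|\bigr)^2 \sum_{\mathcal{B}} |V(\mathcal{B})| = \mathcal{O}(n^3)$, which is essentially what the paper does (it first reduces $A_1$-type and $A_2$-type quantities to $\mathcal{O}(|V(\mathcal{B})|)$ and $\mathcal{O}(|V(\mathcal{B})|+|E(\mathcal{B})|)$ per block by precomputing the total $Y=\sum_r \min_i \gamma_{R3}^i(H_r,v_r)$ and subtracting per-vertex contributions, and only $A_3$ costs $\mathcal{O}(|V(\mathcal{B})|^3)$). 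Alternatively, the ``three smallest gaps'' optimization you mention parenthetically would make every iteration linear in the block size; but as written, neither route is actually carried out, so the stated $\mathcal{O}(n^3)$ conclusion does not follow from the steps you give.
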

\begin{proof}Let \( G \) be a connected block graph. Let the set of cut-vertices of \( G \) be denoted by \( \{c_1, c_2, \ldots, c_s\} \), and the set of blocks be denoted by \( \{\mathcal{B}_1, \mathcal{B}_2, \ldots, \mathcal{B}_q\} \). We initialize the values of \( \gamma_0, \gamma_1, \gamma_2, \gamma_3, \gamma_4, \gamma_5, \gamma_6, \gamma_7 \) and \(\gamma_8\) for every vertex in \( G \) in \( \mathcal{O}(n) \) time.
At each iteration, the algorithm selects a end block in the current graph \( G' \) based on the ordering obtained by performing a breadth-first search (BFS) on the \textit{cut-tree} \( T_G \).
Constructing \( T_G \) takes \( \mathcal{O}(|V(G)| + |E(G)|) \) time~\cite{aho1974design}, and a BFS traversal of \( T_G \) takes \( \mathcal{O}(q + s) \) time. Hence, the ordering of vertices can be computed in overall \( \mathcal{O}(|V(G)| + |E(G)|) \) time.
We now analyze the time complexity of a single iteration of the \texttt{while} loop. Let \( \mathcal{B}_i \) be the block selected during the \( i \)th iteration, with vertex set \( V(\mathcal{B}_i) = \{v_1, v_2, \ldots, v_k\} \). \vspace{2mm} \\
Now we calculate the time required for computing \( \gamma_0, \gamma_1, \gamma_2, \gamma_3, \gamma_4, \gamma_5, \gamma_6, \gamma_7\) and \(\gamma_8\). \vspace{2mm} \\
\textbf{Computing $A_1$ (or $B_1$, $B_3$, $C_1$, $C_2$, $C_3$):} \vspace{2mm} \\
We have that $A_1$ = $\min_{a \in \{2,...,k\}}\{\gamma_{R3}^3 (H_a, v_a)+\sum_{r \in \{2, ..., k\}\setminus\{a\}}\min\{\gamma_{R3}^i(H_r, v_r):i=0,1,2,3,4,5,6,7,8\}\}$. We first compute the value of $Y$ = $\sum_{r \in \{2, ..., k\}}\min$ $\{\gamma_{R3}^i(H_r, v_r):i=0,1,2,3,4,5,6,7,8\}$ in $\mathcal{O}(|V(\mathcal{B}_i)|)$ time. Next, we compute the value of $\gamma_{R3}^3 (H_a, v_a)+(Y-\min\{\gamma_{R3}^i(H_a, v_a):i=0,1,2,3,4,5,6,7,8\})$ for each vertex $v \in \{v_2,v_3,...,v_k\}$ and chose the minimum among them to obtain $A_1$. Hence, we compute the value of $A_1$ in $\mathcal{O}(|V(B_i)|)$ time. As $B_1$, $B_3$, $C_1$, $C_2$ and $C_3$ have similar form to that of $A_1$, the values of $B_1$, $B_3$, $C_1$, $C_2$ and $C_3$ can also be computed in $\mathcal{O}(|V(\mathcal{B}_i)|)$ time. \vspace{2mm} \\
\textbf{Computing $A_2$ (or $B_2$, $B_4$):} \vspace{2mm} \\
Recall that, $A_2$ = $\min_{a,b \in \{2,...,k\}}\{\gamma_{R3}^2 (H_a, v_a)+\min\{\gamma_{R3}^i (H_b, v_b): i=1,7,8\}+\sum_{r \in \{2, ..., k\}\setminus\{a,b\}}\min$ $\{\gamma_{R3}^i(H_r, v_r):i=0,1,2,3,4,5,6,7,8\}\}$. We compute the value of $Y = \sum_{r \in \{2, ..., k\}}$ $\min\{\gamma_{R3}^i(H_r, v_r):i=0,1,2,3,4,5,6,7,8\}$ in $\mathcal{O}(|V(\mathcal{B}_i)|)$ time. Next, we compute the value of $\gamma_{R3}^2 (H_a, v_a)+\min\{\gamma_{R3}^i (H_b, v_b): i=1,7,8\}+(Y-\sum_{r \in \{a,b\}}$ $\{\gamma_{R3}^i(H_r, v_r):i=0,1,2,3,4,5,6,7,8\})$ for each pair of vertices $v_a,v_b \in \{v_2,v_3,...,v_k\}$ and chose the minimum among them to obtain $A_2$. The number of such pairs is $|E(\mathcal{B}_i)|$. Hence, we compute the value of $A_2$ in $\mathcal{O}(|V(\mathcal{B}_i)|+|E(\mathcal{B}_i)|)$ time. As $B_2$ and $B_4$ have similar form to that of $A_2$, the values of $B_2$ and $B_4$ can be computed in $\mathcal{O}(|V(\mathcal{B}_i)|+|E(\mathcal{B}_i)|)$ time. \vspace{2mm} \\
\textbf{Computing $A_3$:} \vspace{2mm} \\
We have that $A_3$ = $\min_{a,b,c \in \{2,...,k\}}\{\min\{\gamma_{R3}^i (H_a, v_a): i=1,7,8\}+\{\min\{\gamma_{R3}^i $ $ (H_b, v_b): i=1,7,8\}+\{\min\{\gamma_{R3}^i (H_c, v_c): i=1,7,8\}+\sum_{r \in \{2, ..., k\}\setminus\{a,b,c\}}\min\{\gamma_{R3}^i $ $(H_r, v_r):i=0,1,2,3,4,5,6,7,8\}\}$. We compute the value of $Y = \sum_{r \in \{2, ..., k\}}$ $\min \{\gamma_{R3}^i(H_r, v_r):i=0,1,2,3,4,5,6,7,8\}$ in $\mathcal{O}(|V(\mathcal{B}_i)|)$ time. Next, we compute the value of $\{\gamma_{R3}^i (H_a, v_a): i=1,7,8\}$+$\{\min\{\gamma_{R3}^i (H_b, v_b): i=1,7,8\}+\{\min\{\gamma_{R3}^i (H_c, v_c): i=1,7,8\}$+$(Y-\sum_{r \in \{a,b,c\}}\{\gamma_{R3}^i(H_r, v_r):i=0,1,2,3,4,5,6,7,8\})$ for each triple of vertices $v_a,v_b,v_c \in \{v_2,v_3,...,v_k\}$ and chose the minimum among them to obtain $A_3$. The number of such triples is $|V(\mathcal{B}_i)| \choose 3$ and the value of $A_3$ can be computed in $\mathcal{O}(|V(\mathcal{B}_i)|^3)$ time. \vspace{2mm} \\
Hence, we obtain that Algorithm {\hyperref[alg1]{\texttt{R3DN-BLOCK}(\(G\))}} computes $\gamma_{R3}(G)$ of a block graph in $\mathcal{O}(n^3)$ time. \qed\end{proof}
Hence, we obtain the following theorem.
\begin{theorem}
    R3D on block graphs is polynomial-time solvable.
\end{theorem}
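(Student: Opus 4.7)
The plan is to assemble the polynomial-time algorithm from the ingredients already in place. Given a connected block graph $G$, the approach is to root it at an arbitrary vertex and process it bottom-up along its cut-tree $T_G$. At every vertex $v$ encountered during the processing, I would maintain the nine quantities $\gamma_{R3}^i(H,v)$ for $i \in \{0\} \cup [8]$, where $H$ is the subgraph already contracted into $v$. These nine labels capture all the partial-domination states needed to extend a Roman $\{3\}$-dominating function across a cut-vertex, as formalized in the definitions preceding Lemma~\ref{lemma6}.

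The first step is to initialize, for every vertex $v \in V(G)$, the nine values corresponding to the singleton subgraph $G[\{v\}]$; these are immediate from the definitions (for example, $\gamma_{R3}^2 = 2$, $\gamma_{R3}^6 = 0$, $\gamma_{R3}^8 = 1$, with all others set to $\infty$ since a lone vertex cannot satisfy those constraints). Next, I would iteratively peel off an end block $\mathcal{B}$ of the current working graph $G'$, with $v_1$ being its cut-vertex to the rest of $G'$ (or any vertex of $\mathcal{B}$ if $\mathcal{B}$ is the last remaining block). For such a block, Lemma~\ref{lemma6} gives closed-form recurrences expressing the nine updated values $\gamma_{R3}^i(H, v_1)$ in terms of the values already stored at $v_1, v_2, \ldots, v_k$. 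I then store the updated values at $v_1$ and remove $v_2, \ldots, v_k$ from $G'$. Since every block is itself complete, Lemmas~\ref{lemma5a}, \ref{lemma5b}, and Observation~\ref{obs1} certify that these nine states partition the possible behaviours of a Roman $\{3\}$-dominating function at the interface vertex. Once a single vertex $v_r$ remains, Lemma~\ref{lemma4} yields $\gamma_{R3}(G) = \min\{\gamma_{R3}^0(G,v_r), \gamma_{R3}^1(G,v_r), \gamma_{R3}^2(G,v_r), \gamma_{R3}^3(G,v_r)\}$.

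Correctness reduces to verifying that the iterative contraction is faithful to the composition operation defined before Lemma~\ref{lemma6}: each end-block peel is precisely a composition of $H_1, \ldots, H_k$, where $H_i$ is the component of $G[S] - E(\mathcal{B})$ containing $v_i$. Since any connected block graph can be built by repeated compositions starting from trivial graphs, inducting on the number of blocks processed gives the desired equality at every stage. The order in which end blocks are peeled can be fixed by a BFS on $T_G$, which is constructible in linear time.

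For the running time, the dominant cost inside the \texttt{while} loop is the evaluation of the recurrences of Lemma~\ref{lemma6}, and the main obstacle here is the triple-sum term $A_3$. I would handle it by the standard trick of precomputing the global sum $Y = \sum_{r}\min_i \gamma_{R3}^i(H_r, v_r)$ once per iteration and then subtracting the contributions of the chosen triple, so that $A_3$ can be evaluated in $\mathcal{O}(|V(\mathcal{B})|^3)$ time per block; the other quantities $A_1, A_2, B_1$--$B_4, C_1$--$C_3$ cost $\mathcal{O}(|V(\mathcal{B})|^2)$ or less by the same technique. Summing the per-block cost over all blocks gives $\mathcal{O}(n^3)$ overall by Lemma~\ref{blockgraph}. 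The theorem then follows immediately: R3D on block graphs is solvable in polynomial time.
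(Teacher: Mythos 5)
Your proposal is correct and follows essentially the same route as the paper: the same nine-state dynamic programming over end blocks of the cut-tree, initialized at singletons, updated via the recurrences of Lemma~\ref{lemma6}, finished with Lemma~\ref{lemma4}, and analyzed with the precomputed-sum trick to get $\mathcal{O}(n^3)$. This matches Algorithm \texttt{R3DN-BLOCK} and Lemma~\ref{blockgraph} in the paper.
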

\section{Conclusion}
In this paper, we have investigated the complexity of R3D and proved that the problem is NP-complete on split graphs, a subclass of chordal graphs. We have initiated the study on the parameterized complexity of the problem and showed the problem is W[2]-hard parameterized by weight. In addition, we presented a polynomial-time algorithm for block graphs. We conclude the paper with the following open questions.
\begin{itemize}
    \item What is the complexity of the problem on interval graphs, strongly chordal graphs and AT-free graphs?
    \item What is the parameterized complexity of the problem parameterized by weight on bipartite graphs, split graphs, planar graphs and circle graphs?
    \item Does R3D admit single-exponential time parameterized algorithms for vertex cover number, distance to clique, neighbourhood diversity and distance to cluster?
\end{itemize}
\bibliographystyle{splncs04}
\bibliography{reference}
\end{document}